\newtheorem{theorem}{Theorem}[section]
\newtheorem{lemma}[theorem]{Lemma}
\newtheorem*{lemma*}{Lemma}
\newtheorem{proposition}[theorem]{Proposition}
\newtheorem{lem}[theorem]{Lemma}
\newtheorem{fact}[theorem]{Fact}
\newtheorem{corollary}[theorem]{Corollary}
\newtheorem{claim}[theorem]{Claim}
\newtheorem{prop}[theorem]{Proposition}
\newtheorem{definition}[theorem]{Definition}
\theoremstyle{remark}
\newtheorem{remark}[theorem]{Remark}
\newcommand{\F}{\ensuremath{\mathbb{F}}}
\newcommand{\R}{\ensuremath{\mathbb{R}}}
\newcommand{\Z}{\ensuremath{\mathbb{Z}}}
\newcommand{\inner}[1]{\langle{#1}\rangle}
\newcommand{\mmod}{~\mathrm{mod}~}
\newcommand{\supp}[1]{\text{Support}(#1)}
\newif\ifnotes\notestrue
\newcommand{\enote}[1]{\textcolor{red}{{\bf (Elena:} {#1}{\bf ) }} \marginpar{\tiny\bf
             \begin{minipage}[t]{0.5in}
               \raggedright ELENA NOTE
            \end{minipage}}}        
\newcommand{\gnote}[1]{\textcolor{blue}{{\bf (GV:} {#1}{\bf ) }} \marginpar{\tiny\bf
             \begin{minipage}[t]{0.5in}
               \raggedright ~
                \end{minipage}}}  
\newcommand{\knote}[1]{\textcolor{red}{{\bf (Karthik:} {#1}{\bf ) }} \marginpar{\tiny\bf
             \begin{minipage}[t]{0.5in}
               \raggedright ??
            \end{minipage}}}
\newcommand{\gnote}[1]{}
\newcommand{\knote}[1]{}
\newcommand{\enote}[1]{}
\newcommand{\ignore}[1]{}
\newcommand{\sidenote}[1]{ \marginpar{\tiny\bf
             \begin{minipage}[t]{0.5in}
               \raggedright *
            \end{minipage}}}
\title{Deciding Orthogonality in Construction-A Lattices}
\author{Karthekeyan Chandrasekaran \thanks{Email: {\tt karthe@illinois.edu}}
\and Venkata Gandikota \thanks{Email: {\tt vgandiko@purdue.edu}}
\and Elena Grigorescu \thanks{Email: {\tt elena-g@purdue.edu}} 
}
\date{}
\begin{document}
\maketitle
\def\conf{1} 

\thispagestyle{empty}
\begin{abstract}

Lattices are discrete mathematical objects with widespread applications to integer programs as well as modern cryptography. A fundamental problem in both domains is the Closest Vector Problem (popularly known as CVP). It is well-known that CVP can be easily solved in lattices that have an orthogonal basis \emph{if} the orthogonal basis is specified. This motivates the orthogonality decision problem: verify whether a given lattice has an orthogonal basis. Surprisingly, the orthogonality decision problem is not known to be either NP-complete or in P. 

In this paper, we focus on the orthogonality decision problem for a well-known family of lattices, namely Construction-A lattices. 
These are  lattices of the form $C+q\Z^n$, where $C$ is an error-correcting $q$-ary code, and are studied in communication settings.
We provide a complete characterization of lattices obtained from binary and ternary codes using Construction-A that have an orthogonal basis. We use this characterization 
to give an efficient algorithm to solve the orthogonality decision problem. Our algorithm also finds an orthogonal basis if one exists for this family of lattices. We believe that these results could provide a better understanding of the complexity of the orthogonality decision problem for general lattices. 

\end{abstract}


\section{Introduction}
A lattice is the set of integer linear combinations of a set of basis vectors $B\in \R^{m\times n}$, namely $L = L(B) = \{ xB \mid x \in \mathbb{Z}^m \}$. 
Lattices are well-studied fundamental mathematical objects that have been used to model diverse discrete structures such as in the area of integer programming \cite{Kannan83}, or in factoring integers \cite{Schnorr13} and factoring rational polynomials \cite{LLL}.
In a groundbreaking result, Ajtai \cite{Ajtai96} demonstrated the potential of computational problems on lattices to cryptography,  by showing  average case/worst case equivalence  between lattice problems related to finding short vectors in a  lattice. 
This led to renewed interest in the complexity of two fundamental lattice problems: the Shortest Vector Problem (SVP) and the Closest Vector Problem (CVP). 
Concretely, in SVP, given a basis $B$ one is asked to output a shortest non-zero vector in the lattice, and in CVP, given a basis $B$ and a target $t\in \R^n$, one is asked to output a lattice vector closest to $t$.

Both SVP and CVP are NP-hard even to approximate up to subpolynomial factors (see \cite{MicReg-surv09} for a survey), and a great deal of research has been devoted to finding families of lattices for which SVP/CVP are easy. A simplest lattice for which CVP is easy is $\Z^n$: indeed, finding the closest lattice vector to a target $t\in \R^n$ amounts to rounding the entries of $t$ to the nearest integer. 
Surprisingly, 
given an arbitrary basis $B$, it is not known how to efficiently verify whether the lattice generated by $B$ is  isomorphic to $\Z^n$ upto an orthogonal transformation.  Further, given an arbitrary basis for a lattice, it is not known how to decide efficiently if the lattice has an orthogonal basis (an orthogonal basis is a basis in which all vectors are pairwise orthogonal). Similar to the case of $\Z^n$, having access to an orthogonal basis leads to an efficient algorithm to solve CVP, but finding an orthogonal basis given an arbitrary basis appears to be non-trivial, with no known efficient algorithms. 
 
 Deciding if a lattice is equivalent to $\Z^n$, and deciding if a lattice has an orthogonal basis, are special cases of the more general Lattice Isomorphism Problem (LIP). In LIP, given lattices $L_1$ and $L_2$ presented by their bases, one is asked to decide if they are isomorphic, meaning if there exists an orthogonal transformation that takes one to the other. LIP has been studied in \cite{PleskenS97, SikiricSV09, HavivR14} and is known to have a $n^{O(n)}$ algorithm \cite{HavivR14}. Recent results of \cite{LenstraS14, LenstraS14M} show that in certain highly symmetric lattices, isomorphism to $\Z^n$ can be decided efficiently.  
The complexity of LIP is not well understood, and is part of the broader study of isomorphism between mathematical objects, of which Graph Isomorphism (GI) is a well-known elusive problem \cite{Babai-book}. Interestingly, there is a polynomial time reduction from GI to LIP \cite{SikiricSV09}.

Given that LIP, deciding isomorphism to $\Z^n$, and  deciding whether a lattice has an orthogonal basis appear to be difficult problems for general lattices, it is natural to address families of lattices where these problems are solvable efficiently. In this work, we focus on the problem of deciding orthogonality for a particular family of lattices, commonly known as Construction-A lattices \cite{ConwayS98}. Construction-A lattices $L$ are obtained from a linear error-correcting code $C$ over a finite field of $q$ elements (denoted $\F_q$) as $L=C+q\Z^n$. \footnote{The term `Construction-A' strictly refers to the case $q=2$,  but we will not make the distinction in this paper.} 
 We resolve the problem of deciding orthogonality in Construction-A lattices for $q=2$ and $q=3$ by showing an efficient algorithm. In addition, the algorithm outputs an orthogonal basis of the input lattice if such a basis exists. 
 
Our main technical contribution is a decomposition theorem for Construction-A lattices that admit an orthogonal basis. A natural way to obtain orthogonal lattices through Construction-A is by taking direct products of lower dimensional orthogonal lattices. We show that this is the only possible way.
We believe that our contributions are a step towards gaining a better understanding of lattice isomorphism problems for more general classes of lattices.

Extending our results to values $q>3$ might require new techniques. For larger $q$, a decomposition characterization seems to require a complete characterization of \emph{weighing matrices} of weight $q$ which is a known open problem. In particular, a direct product decomposition characterization of weighing matrices for the case of $q=4$ is known. However, even in this case the parts in the direct product decomposition may not be of constant dimension, so designing an efficient algorithm for the orthogonality decision problem through a direct product decomposition characterization appears to be non-trivial.

\subsection{Our results and techniques}

As mentioned above, we start by showing a structural decomposition of orthogonal lattices of the form $C+2\Z^n$ and $C+3\Z^n$ into constant-size orthogonal lattices. We remark that the decomposition holds up to permutations of the coordinates, and we use the notation $C_1\cong C_2$ and $L_1\cong L_2$ to denote the equivalence of codes and lattices under permutation of coordinates. We use the notation $L_1 \otimes L_2$ to denote the direct product of two lattices.

\begin{theorem} \label{thm:binary-decomposition}
Let $L_C = C + 2\mathbb{Z}^n$ be a lattice obtained from a binary linear code $C \subseteq \mathbb{F}_2^n$. 
Then the following statements are equivalent: 
\begin{enumerate}
\item $L_C$ is orthogonal.
\item $L_C \cong \otimes_i L_i$, where each $L_i$ is 
either $\Z$,  
or $2\Z$,  
or the $2$-dimensional lattice generated by the rows of the matrix $\begin{bmatrix} 1&1\\1&-1\end{bmatrix}$.
\item $C \cong \otimes_i C_i$, where each $C_i $ is 
either a  length-$1$ binary linear code $\subseteq \{0,1\}$,
or the length-$2$ binary linear code $\{00,11\}$.
\end{enumerate}
\end{theorem}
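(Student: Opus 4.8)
The plan is to prove the cycle of implications $(2)\Rightarrow(1)\Rightarrow(3)\Rightarrow(2)$, where $(2)\Rightarrow(1)$ and $(3)\Rightarrow(2)$ are routine and the work is concentrated in $(1)\Rightarrow(3)$. For $(2)\Rightarrow(1)$: an orthogonal basis of a direct product is obtained by concatenating orthogonal bases of the factors (padding with zeros on the complementary coordinates), and each of $\Z$, $2\Z$, and the lattice generated by $\begin{bmatrix}1&1\\1&-1\end{bmatrix}$ visibly has an orthogonal basis. For $(3)\Rightarrow(2)$: if $C\cong\otimes_i C_i$ with the $C_i$ on disjoint coordinate blocks, then $C+2\Z^n\cong \otimes_i\,(C_i + 2\Z^{n_i})$, and one checks directly that a length-$1$ code $\{0\}$ gives the factor $2\Z$, a length-$1$ code $\{0,1\}$ gives $\Z$, and $\{00,11\}$ gives the lattice generated by $\begin{bmatrix}1&1\\1&-1\end{bmatrix}$ (since $(1,1)$ and $(2,0)\sim(0,-2)$... more precisely $(1,1)$ and $(1,-1)$ generate it and both lie in $\{00,11\}+2\Z^2$).

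The heart of the argument is $(1)\Rightarrow(3)$. Suppose $L_C=C+2\Z^n$ is orthogonal, with orthogonal basis vectors $b_1,\dots,b_n$ forming the rows of a matrix $B$. First I would argue that $L_C$ is an integer lattice containing $2\Z^n$, so each $b_i\in\Z^n$, and orthogonality plus $\det L_C = 2^{n-k}$ (where $k=\dim C$) forces the product of squared norms $\prod_i\|b_i\|^2 = 4^{n-k}$. The key normalization step is to show we may assume each basis vector $b_i$ has entries in $\{0,\pm1,\pm2\}$ and in fact that, after scaling, the "nontrivial" basis vectors have all entries in $\{0,\pm1\}$: since $2e_j\in L_C$ for every $j$, if some $b_i$ has a $\pm2$ entry in coordinate $j$ one can show coordinate $j$ is "used only by $b_i$" and that row can be taken to be $2e_j$ (contributing a $2\Z$ factor). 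After removing those, the remaining basis vectors lie in $\{0,\pm1\}^n$, are pairwise orthogonal, and their $\{0,1\}$-reductions span $C$; so the matrix $W$ of these rows, restricted to the coordinates they cover, is a square $\pm1$/$0$ matrix with $WW^T = 2I$ — i.e. a weighing matrix $W(m,2)$. The structural classification of weighing matrices of weight $2$ (every row and column has exactly two nonzero entries, so the "support graph" is a disjoint union of cycles, and the orthogonality/sign constraints force each such component to have size exactly $2$) then shows $W$ is, up to row/column permutations and sign changes, block-diagonal with $2\times2$ blocks $\begin{bmatrix}1&1\\1&-1\end{bmatrix}$, plus possibly $1\times1$ blocks $[\pm1]$. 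Translating back: each $2\times 2$ block contributes a $C_i\cong\{00,11\}$, each $1\times 1$ block $[\pm1]$ a coordinate with $C_i\cong\{0,1\}$, and each coordinate not covered by $W$ a $C_i\cong\{0\}$; permuting coordinates to collect the blocks gives $C\cong\otimes_i C_i$.

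The main obstacle I anticipate is the normalization/reduction step — showing that an arbitrary orthogonal basis of $L_C$ can be massaged into one consisting of $2e_j$'s together with $\{0,\pm1\}$-vectors supported on disjoint "blocks." The subtlety is that a priori an orthogonal basis could have large or fractional-looking entries, shared supports, or $\pm2$ entries tangled with $\pm1$ entries; I expect to handle this by a counting/determinant argument (the squared norms multiply to $4^{n-k}$ and each must be a positive integer power structure compatible with the code having $2^{n-k}$ being the index), together with the observation that the $\{0,1\}$-reductions of the basis vectors must be a basis of $C$ over $\F_2$, which pins down how many rows can be "trivial" ($2e_j$ type) versus genuinely contribute to $C$. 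Once the basis is in weighing-matrix form, invoking the known classification of weight-$2$ weighing matrices (or proving it directly via the cycle-structure argument above) finishes the proof cleanly.
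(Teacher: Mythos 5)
Your overall architecture matches the paper's: constrain the entries of an orthogonal basis $B$ of $L_C$, peel off rows of the form $\pm e_j$ and $\pm 2e_j$, observe that what remains is a weighing matrix of weight $2$, and invoke its classification (your cycle-structure argument for that classification is a valid and arguably cleaner alternative to the paper's induction, and your direct route $(1)\Rightarrow(3)$ versus the paper's $(1)\Rightarrow(2)\Rightarrow(3)$ is only a reorganization). The issue is that the step you yourself flag as the crux --- forcing every basis vector to be $\pm e_j$, $\pm 2e_j$, or a $\{0,\pm1\}$-vector of support size $2$ --- is not actually established by the tools you propose. The determinant identity $\prod_i\|b_i\|^2=4^{n-k}$ only controls the product of the squared norms; it permits a row such as $b=(1,1,1,1)$ with $\|b\|^2=4$, and the observation that $B\bmod 2$ spans $C$ does not exclude it either (indeed $(1,1,1,1)\bmod 2$ is a perfectly good codeword). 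So the normalization step, as sketched, has a genuine gap.

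The missing ingredient is the integrality of $2B^{-1}$. Since $2\Z^n\subseteq L_C$, each $2e_j$ is an integer combination of the rows of $B$, i.e.\ $2B^{-1}\in\Z^{n\times n}$. Orthogonality gives $B^{-1}=B^{T}D^{-1}$ with $D=\mathrm{diag}(\|b^{(i)}\|^2)$, so every entry satisfies $\|b\|^2\mid 2b_j$; together with $|b_j|\le\|b\|^2$ this yields $2|b_j|\in\{0,\|b\|^2,2\|b\|^2\}$ for every nonzero entry (this is the paper's Claim~\ref{claim:property1-B-2Z}). From this trichotomy the three row types follow by elementary arithmetic: $2|b_j|=2\|b\|^2$ forces $b=\pm e_j$; $2|b_j|=\|b\|^2$ with $b_j=\pm2$ forces $b=\pm2e_j$; and $2|b_j|=\|b\|^2$ with $b_j=\pm1$ forces exactly one other coordinate equal to $\pm1$. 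This is exactly what rules out $(1,1,1,1)$: there $2b_j/\|b\|^2=1/2\notin\Z$. If you replace your counting argument with this observation, the rest of your proof goes through (with the routine bookkeeping that the surviving $\{0,\pm1\}$ rows occupy the remaining coordinates and form a square matrix, so that each remaining column also has exactly two nonzero entries).
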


The decomposition characterization leads to an efficient algorithm to verify if a given lattice obtained from a binary linear code using Construction-A is orthogonal. For the purposes of this algorithmic problem, the input consists of a basis to the lattice. The algorithm  finds the component codes given by the characterization thereby computing the orthogonal basis for such a lattice. 

\begin{theorem}\label{thm:algo-binary}
Given a basis for a lattice $L$ obtained from a binary linear code 
using Construction-A, there exists an algorithm running in time $O(n^6)$ that verifies if $L$ is orthogonal, and if so, outputs an orthogonal basis. 
\end{theorem}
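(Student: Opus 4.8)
The plan is to use the decomposition characterization in Theorem \ref{thm:binary-decomposition}: $L$ is orthogonal if and only if (up to a permutation of coordinates) the underlying code $C$ is a direct product of component codes, each either a length-$1$ binary code or the length-$2$ code $\{00,11\}$. So the algorithm must (a) recover $C$ from the given basis of $L$, (b) search for such a coordinate partition, and (c) if found, output the corresponding orthogonal basis. The first step is the main technical bookkeeping: given a basis $B$ for $L$, reduce $B$ modulo $2$ to get the $\mathbb{F}_2$-span of the rows, which (after checking that $2\mathbb{Z}^n \subseteq L$, i.e.\ that $L$ really is a Construction-A lattice of the stated form) yields a generator matrix $G$ for $C \subseteq \mathbb{F}_2^n$; all of this is Gaussian elimination over $\mathbb{Z}$ and over $\mathbb{F}_2$ on an $n \times n$ (or $m \times n$) integer matrix, costing at most $O(n^3)$ arithmetic operations, hence polynomially many bit operations.

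Next I would recover the coordinate partition directly from $C$ rather than by brute-force search. Observe that in the product structure, the length-$2$ blocks $\{00,11\}$ are exactly pairs of coordinates $\{j,k\}$ on which every codeword of $C$ is constant, i.e.\ the $j$-th and $k$-th columns of the generator matrix are equal; the remaining coordinates are the length-$1$ blocks, and among those, a coordinate supports the code $\{0,1\}$ (giving a $\mathbb{Z}$ factor) if some codeword is nonzero there, or $\{0\}$ (giving a $2\mathbb{Z}$ factor) if it is identically zero. So the algorithm computes the equivalence relation ``column $j$ equals column $k$'' on the columns of $G$, checks that every equivalence class has size $1$ or $2$ (if some class has size $\geq 3$, or more precisely if the resulting product does not reconstitute $C$, then by Theorem \ref{thm:binary-decomposition} $L$ is not orthogonal and we reject), and then verifies that $C$ is genuinely the direct product of the blocks read off from this partition --- equivalently that $\dim C$ equals the number of classes of size $2$ plus the number of size-$1$ classes on which $G$ is nonzero. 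Comparing columns is $O(n)$ per pair and $O(n^3)$ overall; the verification is another Gaussian elimination.

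Once the valid partition is in hand, outputting the orthogonal basis is immediate from item 2 of Theorem \ref{thm:binary-decomposition}: place the vector $(1)$ or $(2)$ in the appropriate single coordinate for each length-$1$ block, and place the two rows of $\begin{bmatrix} 1 & 1 \\ 1 & -1 \end{bmatrix}$ in the two coordinates of each length-$2$ block (zeros elsewhere); this is an $n \times n$ matrix whose rows are pairwise orthogonal and which generates $L$, and writing it down costs $O(n^2)$. The correctness is exactly the equivalence of (1), (2), (3) in Theorem \ref{thm:binary-decomposition}, so no new structural argument is needed here. The dominant cost as stated is $O(n^6)$, which comfortably absorbs the $O(n^3)$ linear-algebra steps together with the overhead of exact integer arithmetic on entries that may be as large as the given basis (bit-lengths bounded by the input size), and of the final verification that the reconstructed product lattice equals $L$; I would not attempt to optimize the exponent. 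The only place where care is required --- and the step I expect to be the main obstacle in writing the proof cleanly --- is arguing that the column-equivalence partition is the \emph{unique} candidate that can work, so that rejecting when it fails is justified; this follows because any product decomposition of the stated type forces the size-$2$ blocks to be exactly the constant-column pairs, but it needs to be stated precisely to rule out pathological overlaps (e.g.\ three mutually equal columns, which cannot be covered by disjoint size-$\leq 2$ constant blocks and hence certifies non-orthogonality).
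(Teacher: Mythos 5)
Your overall strategy (recover $C$ from the basis, then decide orthogonality via the characterization of Theorem \ref{thm:binary-decomposition}) matches the paper's, but your method for finding the decomposition is genuinely different. The paper peels off one component at a time: for each candidate coordinate (or pair of coordinates) it projects the generators, extends them by all admissible symbols, tests membership in $C$ to decide whether $C \cong \{0,1\}\otimes C'$, $\{0\}\otimes C'$ or $\{00,11\}\otimes C'$, and recurses, for a total of $O(n^6)$. You instead read off the entire candidate partition in one pass from the equality pattern of the columns of $G$ and certify it with a single dimension count, implicitly using the (correct) fact that $C$ is always contained in the product code $\hat C$ determined by the column classes, so $C=\hat C$ exactly when $\dim C = \dim \hat C$. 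This is cleaner and runs in $O(n^3)$, comfortably inside the claimed bound, and the uniqueness argument you sketch (two equal nonzero columns cannot lie in distinct blocks of a valid decomposition, since distinct blocks are separated by some codeword) is sound.

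There is, however, a concrete error in your treatment of zero columns. All identically-zero columns of $G$ are mutually equal, so under your relation they form a \emph{single} equivalence class. If $L \cong 2\Z \otimes 2\Z \otimes 2\Z \otimes \cdots$, i.e.\ $C$ has three or more $\{0\}$ components, that class has size at least $3$ and your algorithm rejects an orthogonal lattice; your closing remark that ``three mutually equal columns \dots certifies non-orthogonality'' is exactly wrong in this case. With precisely two zero columns, the size-$2$ class would be misread as a $\{00,11\}$ block (whose projection should be $\{00,11\}$, not $\{00\}$) and the dimension count would again fail spuriously. The fix is easy: first set aside every zero column as its own $\{0\}$ block (a $2\Z$ factor), and run the column-equivalence and dimension arguments only on the nonzero columns, where your analysis is correct. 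With that amendment the proof goes through.
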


We obtain a similar decomposition and algorithm for lattices obtained from ternary codes. For succinctness of presentation we define the following integer matrix:
\begin{align*}
M&=\begin{bmatrix} 1&1&1&0 \\ 1&-1&0&1\\ 1 & 0& -1&-1 \\ 0 & 1 & -1& 1\end{bmatrix}.
\end{align*}

\begin{theorem} \label{thm:ternary-decomposition}
Let $L_C = C + 3\mathbb{Z}^n$ be a lattice obtained from a ternary linear code $C \subseteq \mathbb{F}_3^n$. 
Then the following statements are equivalent: 
\begin{enumerate}
\item $L_C$ is orthogonal.
\item $L_C \cong \otimes_i L_i$, where each $L_i$ is either  $\Z$, or   $3\Z$, or the $4$-dimensional lattice generated by the rows of a matrix ${\cal{T}}(M)$ obtained from $M$ by negating some subset of columns.
\item $C \cong \otimes_i C_i$, where each $C_i $ is either a linear length-$1$ ternary code, 
or the linear length-$4$ ternary code generated by the rows of $({\cal{T}}(M) \mmod 3)\in \F_3^{4\times 4}$, where ${\cal T}(M)$ is  obtained from $M$ by negating some subset of its columns. 
\end{enumerate}
\end{theorem}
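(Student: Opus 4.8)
The plan is to mirror the binary case (Theorem~\ref{thm:binary-decomposition}), with the Gram/generator matrix analysis now governed by weighing matrices of weight $3$ rather than $\pm1$-vectors. The implications $(3)\Rightarrow(2)\Rightarrow(1)$ are the easy directions: if $C\cong\otimes_i C_i$ with each $C_i$ either a length-$1$ ternary code or the length-$4$ code generated by $({\cal T}(M)\bmod 3)$, then $L_C\cong\otimes_i L_i$ with $L_i$ the corresponding Construction-A lattice (direct products commute with $+3\Z^n$ coordinatewise), and one checks by direct computation that $\Z$, $3\Z$, and the lattice generated by ${\cal T}(M)$ are each orthogonal --- the key point being that $MM^{T}=3I_4$, which is preserved under negating columns of $M$. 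A direct product of orthogonal lattices is orthogonal, giving $(1)$. So the real content is $(1)\Rightarrow(3)$ (equivalently $(1)\Rightarrow(2)$).

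For $(1)\Rightarrow(2)$, suppose $L_C$ is orthogonal with orthogonal basis given by the rows of some $B\in\Z^{n\times n}$, so $BB^{T}=D$ is diagonal with positive integer entries $d_1,\dots,d_n$. The first step is to understand the possible row norms: since $L_C=C+3\Z^n$ contains $3\Z^n$, each standard-basis vector scaled by $3$ lies in $L_C$, and since $C\subseteq\F_3^n$ has entries in $\{0,1,2\}$ lifted to $\{0,\pm1\}$ (mod $3$), every lattice vector reduces mod $3$ to a codeword; combining these facts with $\det(L_C)=3^{n-k}$ (where $k=\dim C$) and $\prod d_i=\det(B)^2=3^{2(n-k)}$, one argues each $d_i\in\{1,3\}$ is forced --- here I would use that a shortest nonzero vector of $L_C$ not in $3\Z^n$ has squared length equal to the minimum weight of a nonzero codeword, and rule out $d_i\geq 9$ by a volume/packing count. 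Rows with $d_i=1$ contribute $\pm e_j$ factors ($\Z$ or $3\Z$ components after sorting coordinates). The crucial step is that the rows with $d_i=3$ assemble, after a permutation of coordinates, into a block-diagonal matrix whose blocks are $\{\pm1,0\}$-matrices $W$ with $WW^{T}=3I$ --- i.e.\ weighing matrices $W(m,3)$.

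The heart of the argument is then the classification of weighing matrices of weight $3$: any $W(m,3)$ is, up to row/column permutations and negations, a direct sum of copies of the single $4\times 4$ weighing matrix $W(4,3)$ (equivalently, the incidence-type matrix $M$ above up to signs); in particular $m$ is a multiple of $4$ and there are no connected weighing matrices of weight $3$ in dimension other than $4$. I expect to prove this by a combinatorial graph argument: form a graph on the $m$ columns where two columns are adjacent if they share a nonzero coordinate; orthogonality of rows forces each such "overlap" to be highly constrained (each pair of rows meets in exactly $0$ or a sign-balanced pattern on $2$ coordinates), and chasing these constraints shows every connected component has exactly $4$ rows and $4$ columns and is equivalent to $M$. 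This is the main obstacle, and it is exactly the place where the ternary case is genuinely harder than the binary case (where the analogous "weighing matrices of weight $2$" are trivially $2\times 2$ blocks $\left[\begin{smallmatrix}1&1\\1&-1\end{smallmatrix}\right]$, and where $q>3$ fails because $W(m,q)$ has no such finite classification). Finally, $(2)\Rightarrow(3)$ follows by reducing the orthogonal basis mod $3$: the $\Z$ and $3\Z$ components give length-$1$ ternary codes, and each $W(4,3)$ block reduces to a generator matrix of the length-$4$ ternary code $({\cal T}(M)\bmod 3)$; one checks this reduction map on the lattice level is consistent with the code-level direct product, completing the cycle of implications.
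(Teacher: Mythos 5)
Your overall architecture (easy directions $(3)\Rightarrow(2)\Rightarrow(1)$, then reduce $(1)\Rightarrow(2)$ to a classification of weighing matrices of weight $3$, then pass to codes mod $3$) matches the paper. But the step where you establish the shape of the orthogonal basis rows has a genuine gap, and as stated it is false. You claim that each row norm $d_i=\|b^{(i)}\|^2$ is forced to lie in $\{1,3\}$ and that you would ``rule out $d_i\ge 9$ by a volume/packing count.'' This cannot work: the theorem's own decomposition includes $3\Z$ components, so $d_i=9$ genuinely occurs (take $C=\{0\}^n$, for which $L_C=3\Z^n$ has orthogonal basis $3I_n$ with every $d_i=9$). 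Relatedly, your parenthetical ``$\Z$ or $3\Z$ components'' attached to the $d_i=1$ rows conflates $\pm e_j$ (norm $1$) with $\pm 3e_j$ (norm $9$). Moreover, even the corrected statement $d_i\in\{1,3,9\}$ does not by itself pin down the rows: a row of norm $9$ could a priori be $(\pm2,\pm2,\pm1,0,\dots)$ rather than $\pm 3e_j$, and your determinant identity $\prod_i d_i=3^{2(n-k)}$ combined with a shortest-vector bound says nothing about an arbitrary orthogonal basis, whose rows need not be short.

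The missing idea is an \emph{entry-wise} divisibility constraint rather than a norm constraint. Since $3\Z^n\subseteq L_C$, the matrix $3B^{-1}$ is integral, and $B^{-1}=B^TD^{-1}$ has columns $b/\|b\|^2$; hence $3b_j\equiv 0\pmod{\|b\|^2}$ for every entry $b_j$ of every row $b$. From this one gets $3|b_j|\in\{0,\|b\|^2,3\|b\|^2\}$ (the value $2\|b\|^2$ is excluded by an integrality computation), and a short case analysis then shows each row is $\pm e_j$, $\pm 3e_j$, or a $\{0,\pm1\}$-vector with exactly three nonzero entries; picking a row of minimum support and inducting on $n$ peels off the $\Z$ and $3\Z$ blocks and leaves a weighing matrix of weight $3$, to which the $W(n,3)\cong\otimes_{n/4}M$ classification applies. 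Your combinatorial argument for that classification is essentially the one in the paper and is fine; it is the reduction to weighing matrices that needs to be repaired along the lines above.
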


\begin{theorem}\label{thm:algo-ternary}
Given a basis for a lattice $L$ obtained from a ternary linear code using Construction-A, there exists an algorithm running in time $O(n^{8})$ that verifies if $L$ is orthogonal, and if so, outputs an orthogonal basis. 
\end{theorem}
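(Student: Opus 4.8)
The plan is to mirror the algorithm promised by Theorem~\ref{thm:algo-binary}, now using the ternary decomposition of Theorem~\ref{thm:ternary-decomposition}. Given a basis $B$ for the input lattice $L$, the first task is to recover the underlying ternary code $C$ and verify that $L$ is indeed of Construction-A form $C+3\Z^n$. Since $3\Z^n \subseteq L$ must hold, I would check that $3e_i \in L$ for each standard basis vector $e_i$ (solving $O(n)$ linear systems over $\Q$, each in time $O(n^3)$ by Gaussian elimination / HNF computation), and then read off $C = (L \cap \{0,1,2\}^n) \bmod 3$ by putting $B$ in Hermite Normal Form; this also normalizes the presentation so that $L = L_C$. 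If $L$ is not of this form we reject. All of this costs $O(n^4)$.

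Next, by Theorem~\ref{thm:ternary-decomposition}, $L_C$ is orthogonal iff $C \cong \otimes_i C_i$ where each $C_i$ is either a length-$1$ ternary code or one of the length-$4$ codes generated by $({\cal T}(M)\bmod 3)$ for some sign pattern on the columns of $M$. The core combinatorial step is therefore to detect this coordinate-permuted direct-product structure. I would build a graph $G$ on the $n$ coordinates whose connected components are the candidate blocks: two coordinates $j,k$ are joined if some codeword of $C$ has nonzero entries in both positions (equivalently, if the $j$-th and $k$-th columns of a generator matrix are "linked"). A direct-product decomposition of $C$ corresponds exactly to a partition of coordinates into classes each of which is a union of connected components; since each admissible factor has length $1$ or $4$, every connected component of $G$ must itself have size $1$ or $4$, and I reject otherwise. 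For each size-$1$ component I check that the corresponding projection of $C$ is one of the (constantly many) linear codes in $\{0,1,2\}$; for each size-$4$ component I extract the $4$ coordinates, compute the projection of $C$ onto them (a ternary code of length $4$), and test whether it equals one of the $2^4$ sign-variants of $({\cal T}(M)\bmod 3)$ — again a constant-time check per block. Computing $G$ requires examining a generator matrix of $C$, which after row-reduction over $\F_3$ costs $O(n^3)$, and the per-component checks total $O(n)$.

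Finally, if all components pass, I assemble an orthogonal basis: for a length-$1$ block contributing $\Z$ use $e_j$, for one contributing $3\Z$ use $3e_j$, and for each length-$4$ block use the four rows of the corresponding ${\cal T}(M)$ placed in the appropriate coordinates — these are pairwise orthogonal by the defining property of $M$ (one verifies $M M^{T} = 3 I$, which is preserved under column sign flips), and orthogonality across blocks is automatic since distinct blocks occupy disjoint coordinates. The resulting $n \times n$ matrix is an orthogonal basis of $L$, and its correctness is exactly the content of Theorem~\ref{thm:ternary-decomposition}.

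The main obstacle, and the place where the $O(n^8)$ bound really comes from, is not the conceptual outline above but making the recovery-and-verification steps robust: confirming that the graph-component partition actually yields a \emph{basis} of $L$ (not merely a sublattice of the right determinant), and in particular that the candidate orthogonal matrix generates all of $L_C$ rather than a proper finite-index sublattice. I expect this to require recomputing an HNF of the candidate basis and comparing it against the HNF of $L$, or equivalently checking containment in both directions via $O(n)$ membership tests, each an $O(n^3)$ linear-algebra computation over $\Q$ with entries of polynomially-bounded bit-length; interleaving this with the search over components and sign-patterns is what inflates the exponent, and a careful accounting of these nested linear-algebraic subroutines is where the bulk of the (routine but delicate) work lies.
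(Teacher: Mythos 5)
Your overall architecture (recover $C$ from the basis, use Theorem~\ref{thm:ternary-decomposition} to reduce orthogonality to a direct-product decomposition of $C$ into length-$1$ and length-$4$ factors, then assemble the block-diagonal orthogonal basis) matches the paper. But the core combinatorial step --- how you actually \emph{find} the decomposition --- is where your proposal breaks. You replace the paper's search by a ``linkage graph'' on coordinates, joining $j$ and $k$ whenever \emph{some} codeword is nonzero in both positions, and claim the admissible blocks are exactly the connected components. That criterion is false. Take $n=2$ and $C=\F_3^2$ (so $L=\Z^2$, which is orthogonal and decomposes as $\{0,1,2\}\otimes\{0,1,2\}$): the codeword $(1,1)$ links the two coordinates, your graph has a single component of size $2$, which is neither $1$ nor $4$, and your algorithm rejects an orthogonal lattice. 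The same failure occurs whenever a decomposable code contains codewords whose support straddles several factors --- which is the generic situation, since a direct product always contains such sums. Your parenthetical alternative (``columns of a generator matrix are linked'') is not basis-invariant, so it does not rescue the argument either. A correct graph-based criterion would have to be built from, e.g., minimal-support codewords (matroid circuits), and would then require its own structure lemma; as written, the decomposition-detection step has no valid justification.

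The paper avoids this entirely by brute force: at each stage it tests whether $C\cong C_1\otimes C'$ for $C_1$ a length-$1$ code (all $n$ coordinates, Algorithm~\ref{algo1}) or $C_1=C_{{\cal T}(M)}$ (all $\binom{n}{4}$ coordinate quadruples and all $2^4$ column-sign patterns, Algorithm~\ref{algo2}), where each test reduces to $O(n)$ codeword-membership checks of the form ``do the extensions $g^{p}$ of the projected generators lie in $C$?''; one containment is automatic from the definition of projection, so only the reverse containment needs checking. This gives $O(n^7)$ per stage and $O(n^8)$ over at most $n$ recursive stages --- so the exponent comes from the $\binom{n}{4}\cdot 2^4$ search, not (as you conjecture at the end) from robustness of lattice-membership or HNF verification. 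To repair your write-up, either adopt the paper's exhaustive per-subset test, or correctly define and prove the component structure you want to exploit.
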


Theorems ~\ref{thm:binary-decomposition} and \ref{thm:algo-binary} are proved in Section~\ref{sec: binary}. Theorems ~\ref{thm:ternary-decomposition} and \ref{thm:algo-ternary} are proved in Section~\ref{sec:ternary}.



\section{Preliminaries}
We denote the set of positive integers up to $n$ by $[n]$, the $n\times n$ identity matrix by $I_n$ and its $j^{th}$ row by $e_j$. For a vector $b\in \R^n$, let $b_j$ denote its $j^{th}$ coordinate, and $\|b\|$ be its $\ell_2$ norm.

A lattice $L\subseteq \R^n$ is said to be of full rank if it is generated by $n$ linearly independent vectors. A lattice $L$ is said to be orthogonal if it has a basis $B$ such that the rows of $B$ are pairwise  orthogonal vectors. A lattice $L$ is {\em  integral} if it is contained in $\mathbb{Z}^n$, namely any basis for $L$ only consists of integral vectors. 

We will denote by $\F_q$ a finite field with $q$ elements.
A {\em linear} code $C$ of length $n$ over  ${\mathbb F}_q$ is a vectorspace  $C\subseteq {\mathbb F}_q^n$. A linear code is specified by a generator matrix $G$ that consists of linearly independent vectors in $\mathbb{F}_q^n$. If $C\subseteq \F_2^n$, then it is called a {\em binary} code, and if $C\subseteq \F_3^n$, then it is called a {\em ternary} code.

The Construction-A of a lattice $L_C$ from a linear code $C\subseteq \F_q^n$, where $q$ is a prime, is defined as 
$L_C := \{ c + q \cdot z \mid c \in \phi(C), z \in \mathbb{Z}^n \}$, where   $\phi$ is the (real embedding) mapping $ i\in\F_q \mapsto i\in \Z$. Construction-A is often abbreviated as $L_C=C+q\Z^n$. 

For any vector $v = (v_1, \cdots, v_n) \in \Z^n$  define $v \mmod q :=(v_1 \mmod q , \cdots, v_n \mmod q)\in \F_q^n$.

\begin{claim}
\label{claim:c-is-Lmodq}
Let $q$ be a prime and L be an integral lattice. If $q\Z^n\subseteq L$ then $C=L \mmod q$ is a linear code over $\F_q$. 
\end{claim}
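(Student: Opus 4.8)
The plan is to show two things: first, that $C = L \bmod q$ is well-defined and closed under the natural operations inherited from $L$, and second, that these operations coincide with the $\F_q$-vector space structure.

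\medskip

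First I would verify closure under addition. Take two codewords $c, c' \in C$; by definition there exist lattice vectors $v, v' \in L$ with $v \equiv c$ and $v' \equiv c' \pmod q$ (coordinatewise). Since $L$ is a lattice (a group under addition), $v + v' \in L$, and $(v + v') \bmod q$ is a well-defined element of $\F_q^n$ which equals $c + c'$ in $\F_q^n$; hence $c + c' \in C$. The subtlety I want to be careful about is that "$L \bmod q$" is the image of the reduction map $\Z^n \to \F_q^n$ restricted to $L$ — this is well-defined precisely because $L$ is integral (so every lattice vector has integer coordinates to reduce), which is given in the hypothesis.

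\medskip

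Next I would verify closure under scalar multiplication by elements of $\F_q$. Since $q$ is prime, $\F_q = \{0, 1, \dots, q-1\}$ and every scalar is a sum of $1$'s, so in principle closure under addition already gives closure under multiplication by $\F_q$-scalars (repeated addition of a codeword to itself, together with the fact that $0 \in C$ since $0 \in L$). Alternatively, for $\alpha \in \{0,\dots,q-1\}$ and $c \in C$ represented by $v \in L$, the vector $\alpha v \in L$ reduces mod $q$ to $\alpha c$. So $C$ is closed under both operations and contains $0$, hence is an $\F_q$-subspace of $\F_q^n$, i.e., a linear code.

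\medskip

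I don't expect a genuine obstacle here — this is essentially the observation that reduction mod $q$ is a ring/group homomorphism $\Z^n \to \F_q^n$ and the image of a subgroup is a subgroup, which over a field is automatically a subspace when $q$ is prime. The one place to state carefully is \emph{why} the hypothesis $q\Z^n \subseteq L$ is even mentioned: it is not needed to show $C$ is a linear code, but it guarantees the converse direction — that $L$ is recoverable as $C + q\Z^n$ — and more immediately it ensures the reduction map $L \to C$ is surjective onto a code whose Construction-A lift sits inside $L$; I would note this but emphasize that for the stated claim (that $C$ is linear) only integrality and the homomorphism property are used. If a cleaner phrasing is wanted, I would simply say: the map $\pi : \Z^n \to \F_q^n$ given by coordinatewise reduction mod $q$ is an $\F_q$-... well, a $\Z$-module homomorphism, $L$ is a $\Z$-submodule of $\Z^n$, so $\pi(L)$ is a $\Z$-submodule of $\F_q^n$; since $q$ is prime, $\Z$-submodules of $\F_q^n$ are exactly $\F_q$-subspaces, so $C = \pi(L)$ is a linear code.
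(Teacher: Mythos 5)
Your proof is correct, and it is essentially the same closure argument as the paper's, but executed through a cleaner route. The paper's proof first uses the hypothesis $q\Z^n\subseteq L$ to show that the canonical integer representative $v \mmod q \in \{0,\dots,q-1\}^n$ of any $v\in L$ is itself a lattice vector ($v - qz \in L$), and then adds these representatives inside $L$ to get closure of $C$ under addition. You instead add \emph{arbitrary} preimages $v,v'\in L$ of $c,c'$ and reduce, which needs only that $L$ is integral and that reduction mod $q$ is a homomorphism; your observation that the hypothesis $q\Z^n\subseteq L$ is not actually needed for linearity is accurate. What the paper's detour buys is the auxiliary fact that every codeword's canonical lift lies in $L$ (equivalently $\phi(C)\subseteq L$, so that $L = C + q\Z^n$ is recoverable), which is exactly what is invoked later in the proof of Fact~\ref{fact:basis-generator-transform}; you flag this converse role of the hypothesis correctly. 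You also handle scalar multiplication explicitly via primality of $q$, a step the paper leaves implicit, so if anything your write-up is the more complete of the two.
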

\begin{proof}
Let $v \in L$ and   $v =( v \mmod q) + q z$ for some $z \in \Z^n$, where here we abuse notation and view $v \mmod q$ as embedded into the integers, instead of a vector in $\F_q^n$. Since $q \Z^n \subseteq L$, it follows that  $v-qz=v \mmod q \in L$.  To show that $C=L \mmod q$ is a linear code over $\F_q$, let $c_1, c_2\in C$. Then $c_1+c_2 \in L$ (where the addition is over $\Z$), and so $(c_1+c_2) \mmod q \in C$.
\end{proof}



We will use the following immediate claim about product of lattices generated from codes. 
\begin{claim}\label{lemma:product_codes_lattices}
Let $L = C + q \mathbb{Z}^n$, for some $q$-ary linear code $C \subseteq \mathbb{F}_q^n$. 
If $L \cong L_1 \otimes L_2$, and $L_1\subseteq \Z^k$, then $L_1 \cong C_1 + q \mathbb{Z}^k$ and $L_2 \cong C_2 + q \mathbb{Z}^{n-k}$, for $q$-ary linear codes $C_1$ and $C_2$ that are projections of $C$ on the coordinates corresponding to $L_1$ and $L_2$ respectively. 
\end{claim}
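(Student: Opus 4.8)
The plan is to reduce everything to Claim~\ref{claim:c-is-Lmodq}, after observing that a direct-product decomposition of $L$ passes down coordinate-wise both to the sublattices $L_1,L_2$ and to the reduction $L \bmod q$.

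First I would normalize the coordinate permutation. The hypothesis $L \cong L_1 \otimes L_2$ only holds up to some permutation $\pi$ of the $n$ coordinates, so I would apply $\pi$ simultaneously to $L$, to the code $C$, and to the partition of $[n]$ into the block $S$ of size $k$ carrying $L_1$ and its complement $\bar S$ of size $n-k$. Since permuting coordinates commutes with Construction-A and with reduction mod $q$, and sends $\F_q$-linear codes to $\F_q$-linear codes, this is harmless, and I may assume $\pi$ is the identity: $L = L_1 \times L_2$ with $L_1 \subseteq \Z^k$ and $L_2 \subseteq \Z^{n-k}$.

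Next I would check the hypotheses of Claim~\ref{claim:c-is-Lmodq} for each factor. Because $C$ is embedded into $\{0,\dots,q-1\}^n\subseteq\Z^n$, the lattice $L = C + q\Z^n$ is integral and contains $q\Z^n$. Hence $L_2 \subseteq \Z^{n-k}$ is integral (as $L \subseteq \Z^n$), and from $q\Z^n = q\Z^k \times q\Z^{n-k} \subseteq L_1 \times L_2$, reading off vectors supported on a single block gives $q\Z^k \subseteq L_1$ and $q\Z^{n-k} \subseteq L_2$. Claim~\ref{claim:c-is-Lmodq} then applies to each of $L_1,L_2$; together with the identity $L' = (L' \bmod q) + q\Z^{m}$ that holds whenever $q\Z^m \subseteq L' \subseteq \Z^m$ (this is exactly what the proof of Claim~\ref{claim:c-is-Lmodq} establishes), we get $\F_q$-linear codes $C_1 := L_1 \bmod q$ and $C_2 := L_2 \bmod q$ with $L_1 = C_1 + q\Z^k$ and $L_2 = C_2 + q\Z^{n-k}$.

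Finally I would identify $C_1,C_2$ as coordinate projections of $C$. Since reduction mod $q$ acts coordinate-wise, $(L_1 \times L_2)\bmod q = (L_1\bmod q)\times(L_2\bmod q)$, i.e. $C = L \bmod q = C_1 \times C_2$; and the projection of a product code $C_1 \times C_2$ onto the first block of coordinates equals $C_1$ (because $C_2$ is nonempty, containing $0$), and onto the second block equals $C_2$. Undoing the permutation $\pi$ from the first step restates this as: $C_1$ and $C_2$ are the projections of $C$ onto the coordinates of $L_1$ and $L_2$ respectively, which is the claim. I do not expect a real obstacle; the only point requiring care is carrying the coordinate permutation uniformly through $L$, $C$, and the index partition so that ``projection'' refers to the same blocks on both sides.
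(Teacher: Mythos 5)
Your proof is correct. The paper in fact states this claim without proof (introducing it as ``immediate''), and your argument --- normalizing the coordinate permutation, reading off $q\Z^k \subseteq L_1$ and $q\Z^{n-k} \subseteq L_2$ from $q\Z^n \subseteq L_1 \otimes L_2$, applying Claim~\ref{claim:c-is-Lmodq} to each factor, and identifying $C = C_1 \otimes C_2$ via coordinate-wise reduction mod $q$ --- is precisely the reasoning the paper relies on implicitly in the $(2)\equiv(3)$ steps of Theorems~\ref{thm:binary-decomposition} and~\ref{thm:ternary-decomposition}.
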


A matrix $U$ is {\em unimodular}  if $U\in \Z^{n \times n}$ and $\det(U)\in \{\pm1\}$.
Two different bases $B_1, B_2$ give the same lattice  if and only if there exists a unimodular matrix $U$ 
such that $B_1 = U B_2$. 
The {\em Hermite Normal Form (HNF) basis} for a full rank lattice $L \subseteq \R^n$ is a square, non-singular, upper triangular matrix $B \subseteq \R^{n \times n}$ such that off-diagonal elements satisfy : $0 \leq B_{i,j} < B_{j,j}$ for all $1 \leq i  < j \leq n$.

\begin{fact}\cite{miccLN} 
There exists an efficient algorithm which on input a set of rational vectors $B$, computes a basis for the lattice generated by B: the algorithm simply computes  the unique HNF basis of the lattice generated by $B$.
\end{fact}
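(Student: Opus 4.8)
The plan is to proceed in three stages: reduce rational input to integer input, establish existence and uniqueness of the Hermite Normal Form, and then control the bit-complexity so that the computation runs in polynomial time. For the reduction, given rational generators $b_1,\dots,b_m \in \Q^n$, let $\delta$ be a common denominator of all their coordinates; then $\delta b_1,\dots,\delta b_m \in \Z^n$ generate the lattice $\delta L$, so a basis $B'$ of $\delta L$ yields the basis $\delta^{-1}B'$ of $L$, and $\log\delta$ is polynomial in the input size. Henceforth I take $B \in \Z^{m\times n}$ with row span $L$, and for concreteness assume $L$ is full rank (the general case is analogous after first restricting to the rational span of the rows).

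\emph{Existence and uniqueness of the HNF.} Existence follows from an explicit elimination procedure: processing columns from left to right, use the extended Euclidean algorithm to combine rows by integer (hence unimodular) row operations until a single positive entry remains in the current column; move that row to the pivot position, clear the entries above it, and reduce the entries of earlier pivot rows modulo their pivots so as to enforce $0 \le B_{i,j} < B_{j,j}$. Every step is a unimodular row operation, so $L$ is preserved; discarding rows that become zero leaves an upper triangular matrix of the required form. For uniqueness, if $H_1, H_2$ are two HNF bases of $L$ then $H_1 = U H_2$ for a unimodular $U$; comparing the upper-triangular shapes forces $U$ to be upper triangular, positivity of the diagonal entries of $H_1,H_2$ forces $U$ to have all-ones diagonal, and the size constraints $0 \le (H_k)_{i,j} < (H_k)_{j,j}$ then force every off-diagonal entry of $U$ to vanish, so $U = I_n$ and $H_1 = H_2$.

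\emph{Polynomial bit-complexity --- the main obstacle.} The naive elimination above can make intermediate entries grow exponentially in magnitude, so the real content of the statement is that this can be avoided. The standard remedy (Kannan--Bachem; Domich--Kannan--Trotter) is to work modulo a determinant: by Gaussian elimination over $\Q$ find $n$ linearly independent rows of $B$ forming a nonsingular $S \in \Z^{n\times n}$ and set $d = \lvert\det S\rvert$; by Cramer's rule $dS^{-1}$ is an integer matrix, hence $d\Z^n \subseteq L(S) \subseteq L$, so appending the rows $d e_1,\dots,d e_n$ to $B$ leaves $L$ unchanged while allowing every entry to be reduced into $\{0,1,\dots,d-1\}$ after each elimination step. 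Since $\log d$ is polynomially bounded by Hadamard's inequality, all intermediate integers have polynomial bit-length, each extended-gcd-and-reduce step runs in polynomial time, and only $O(mn)$ such steps are needed; hence the whole computation is polynomial time. By the uniqueness argument its output is precisely the HNF basis of $L$, and undoing the initial scaling by $\delta^{-1}$ yields a basis of the original rational lattice, as claimed.
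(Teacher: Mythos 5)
Your proof is correct: the reduction to integer input by clearing denominators, existence and uniqueness of the HNF via unimodular row operations and the triangularity/reducedness argument, and the modulo-determinant trick (with $d\Z^n \subseteq L$ justified by Cramer's rule and $\log d$ bounded via Hadamard) are exactly the standard route to polynomial-time HNF computation. The paper itself gives no proof of this Fact --- it is quoted from the cited lecture notes --- and your argument matches that standard treatment; the only point handled somewhat briefly is the non-full-rank case, which you correctly note reduces to the full-rank one (and the paper only ever applies the Fact to full-rank inputs such as $\begin{bmatrix} G \\ qI_n \end{bmatrix}$).
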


We note that $L_C=C+q\Z^n$ contains $q \mathbb{Z}^n$ as a sublattice and hence it is a full rank lattice.
 
 \begin{fact}
 \label{fact:basis-generator-transform}
 A basis $B$ for the lattice $L_C$ specified by the generator matrix $G$ for the code $C$ can be computed efficiently by taking the HNF of the matrix
$\begin{bmatrix} G \\ qI_n \end{bmatrix}$. Conversely, given a basis $B$ of $L_C$, the generator matrix for $C$ can be computed efficiently by finding a basis for $B \mmod q $ by row reduction over $\F_q$. 
 \end{fact}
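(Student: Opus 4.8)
The plan is to prove the two directions of the fact separately, each time reducing to the defining identity $L_C=\phi(C)+q\Z^n$ together with a standard efficient subroutine: the cited HNF algorithm for the first direction, and Gaussian elimination over the field $\F_q$ (using that $q$ is prime) for the converse.

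For the first direction, I would let $k$ be the number of rows of $G$ and let $\Lambda=\{xG+qz:x\in\Z^k,\ z\in\Z^n\}$ be the lattice generated by the rows of $\begin{bmatrix}G\\qI_n\end{bmatrix}$. The key step is the set equality $\Lambda=L_C$. To see $\Lambda\subseteq L_C$: for $x\in\Z^k$ we have $xG\equiv(x\mmod q)G\pmod q$ coordinatewise, so $xG=c+qz'$, where $c\in\phi(C)$ is the embedding of the codeword $\big((x\mmod q)G\big)\mmod q$ and $z'\in\Z^n$; hence $xG+qz\in L_C$. For $L_C\subseteq\Lambda$: any $c\in\phi(C)$ is the embedding of $yG\mmod q$ for some $y\in\F_q^k$ because $G$ generates $C$; lifting $y$ to $\Z^k$ makes $yG\equiv c\pmod q$ coordinatewise, so $c=yG+qw$ with $w\in\Z^n$ and $c+qz\in\Lambda$. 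Once $\Lambda=L_C$ is established, the claim follows: since $L_C\supseteq q\Z^n$ it is full rank, and by the cited HNF fact the HNF of $\begin{bmatrix}G\\qI_n\end{bmatrix}$ is an efficiently computable square nonsingular basis of $\Lambda=L_C$.

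For the converse I would first observe that $q\Z^n\subseteq L_C$, so Claim~\ref{claim:c-is-Lmodq} gives that $L_C\mmod q$ is a linear code over $\F_q$, and moreover $L_C\mmod q=(\phi(C)+q\Z^n)\mmod q=C$ since reduction mod $q$ kills the $q\Z^n$ summand and inverts $\phi$ (whose entries lie in $\{0,\dots,q-1\}$). Next, writing $b_1,\dots,b_n$ for the rows of the given basis $B$, every lattice vector is an integer combination $\sum_i a_ib_i$; reducing such a combination mod $q$ shows every element of $C=L_C\mmod q$ lies in the $\F_q$-span of $\{b_i\mmod q\}$, while conversely each $b_i\mmod q\in C$ and $C$ is $\F_q$-linear, so that span is contained in $C$. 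Hence the rows of $B\mmod q$ span $C$ over $\F_q$, and Gaussian elimination over $\F_q$ extracts in polynomial time a maximal linearly independent subset, i.e.\ a generator matrix for $C$.

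The argument is essentially bookkeeping; the one point needing care is keeping straight the two uses of ``$\mmod q$'' — the embedding map $\F_q\to\{0,\dots,q-1\}\subseteq\Z$ hidden inside $\phi$, versus reduction $\Z\to\F_q$ — and this is precisely where primality of $q$ and linearity of $C$ enter. I do not foresee a genuine obstacle beyond being careful with these conventions.
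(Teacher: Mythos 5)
Your proposal is correct and follows essentially the same route as the paper: identify the lattice generated by the rows of $\begin{bmatrix} G \\ qI_n \end{bmatrix}$ with $L_C$ and invoke the HNF fact, then for the converse use Claim~\ref{claim:c-is-Lmodq} together with the observation that reducing integer combinations of the rows of $B$ modulo $q$ shows $B \mmod q$ spans $C$, so row reduction over $\F_q$ yields a generator matrix. If anything, your write-up is more explicit than the paper's (both inclusions $\Lambda = L_C$ and both containments between $C$ and the span of $B \mmod q$ are spelled out), which only strengthens the argument.
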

 \begin{proof}[Proof of Fact ~\ref{fact:basis-generator-transform}]
Let $L_C$ be a lattice obtained by Construction-A from a $q$-ary linear code $C \subseteq \F_q^n$, $L_C= C+q\Z^n$. 
We first show that given a generator $G$ for the linear code $C$, the $HNF(\begin{bmatrix} G \\ qI_n \end{bmatrix})$ gives a basis for the lattice $L_C$.

Let $B = HNF(\begin{bmatrix} G \\ qI_n \end{bmatrix})$. By definition of the $HNF$ basis, $B$ is a basis for the lattice which contains each generator vector $g \in G$ and each $q e_j$ for all $j \in [n]$. 
We note that each vector $v \in L_C$ is a linear combination of the generators of $C$ and $3I_n$ which is exactly the lattice $L(B)$. Therefore, $B$ is a basis for $L_C$. 

Given a basis $B$ for $L_C$, we now show that the set of linearly independent vectors in $\F_q^n$ obtained by embedding $B \mmod q$  into $\F_q$ gives a generator for the code $C$. 
 $L_C$,  contains $q \Z^n$ as a sublattice and form Claim~\ref{claim:c-is-Lmodq}, we can conclude that the code $C$  is the embedding of $L_C \mmod q$  into $\F_q$. 
Since any lattice vector $v \in L_C$ , is an integer linear combination of rows of $B$, all codewords in $L_C \mmod q$ can be obtained as linear combinations of $B \mmod q$ over $\F_q$. Therefore, the linearly independent set of vectors in $B \mmod q$ form a generator for the code $C$.
\end{proof}

A {\em weighing matrix } of order $n$ and weight $k$ is a $n \times n$ matrix with entries in $\{0, 1, -1\}$ such that each row and column has exactly $k$ non-zero entries and the row vectors are orthogonal to each other.  By definition, a weighing matrix $W$ satisfies $WW^T = k I_n$. 
For matrices $A\in \R^{n_1\times n_1}$ and $B\in \R^{n_2\times n_2}$, we denote the $(n_1+n_2)\times (n_1+n_2)$-dimensional block-diagonal matrix obtained using blocks $A$ and $B$ by $A\otimes B$. We will use the following characterization of weighing matrices of weights $2$ and $3$. For completeness we present a proof of Theorems \ref{thm:weight-2-matrices} and \ref{thm:weight-3-matrices} here. 

\begin{theorem}\cite{chan86}\label{thm:weight-2-matrices}
A matrix $W$ is a weighing matrix of order $n$ and weight $2$ if and only if $W$ can be obtained from \[\otimes_{i=1}^{n/2}\begin{bmatrix} 1 & 1\\ 1 & -1\end{bmatrix}\] by negating some rows and columns and by interchanging some rows and columns. 
\end{theorem}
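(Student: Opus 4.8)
The plan is to prove both directions. The "if" direction is immediate: a direct sum $\otimes_{i=1}^{n/2} \begin{bmatrix} 1 & 1 \\ 1 & -1 \end{bmatrix}$ is clearly a weighing matrix of weight $2$ (each row/column of each block has exactly two nonzero entries and the two rows of each block are orthogonal), and negating rows or columns, or permuting rows or columns, preserves all three defining properties of a weighing matrix (entries in $\{0,1,-1\}$, exactly $k$ nonzeros per row and column, pairwise orthogonal rows). So the content is entirely in the "only if" direction.

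For the "only if" direction, let $W$ be a weighing matrix of order $n$ and weight $2$. First I would normalize: by negating rows and columns we may assume the leftmost nonzero entry in each row is $+1$, and likewise arrange a convenient sign pattern; more importantly, by permuting rows and columns we may bring $W$ into a canonical shape. The key structural observation is to look at the bipartite "support graph" $H$ on vertex sets (rows) $\cup$ (columns), with an edge $(i,j)$ whenever $W_{i,j}\ne 0$. Since every row and every column has exactly two nonzero entries, $H$ is a $2$-regular bipartite graph, hence a disjoint union of even cycles. I would argue that each such cycle must have length exactly $4$ — equivalently, it corresponds to a pair of rows and a pair of columns forming a $2\times 2$ block — and this is where the orthogonality condition does the work: if a cycle had length $\ge 6$, one can walk along it and track the $\pm 1$ signs; orthogonality of consecutive rows sharing a column forces a sign constraint, and going around the whole cycle yields a parity obstruction (the product of signs around the cycle must be consistent, but the alternating constraints around a cycle of length $>4$ cannot all be satisfied simultaneously by a $\{0,1,-1\}$ matrix with orthogonal rows). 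Concretely: two rows meeting in a single common column cannot be orthogonal (their inner product is a single nonzero term $\pm1\ne 0$), so any two rows sharing a column must share a \emph{second} column; a $2$-regular bipartite graph in which every pair of adjacent rows shares two columns forces every connected component to be a single $4$-cycle. Once each component is a $4$-cycle, the corresponding $2\times2$ submatrix has all four entries nonzero with orthogonal rows, so after sign normalization it equals $\begin{bmatrix} 1 & 1 \\ 1 & -1 \end{bmatrix}$ up to negating a row or column. Permuting rows and columns to group these pairs together exhibits $W$ in the claimed form.

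The main obstacle I anticipate is making the cycle-length argument fully rigorous — in particular, cleanly ruling out long cycles. The crisp way to do it is the two-line observation above: two distinct rows whose supports intersect in exactly one coordinate have inner product equal to a nonzero $\pm1$, contradicting orthogonality; hence adjacent rows in the support graph share \emph{both} their support coordinates, which immediately collapses each component to a $4$-cycle (a row has degree $2$, its two columns each have degree $2$, and the "second" row attached to each of those columns must be the same row). After that, the sign bookkeeping and the final permutation are routine, so I would present those briefly. One should also note $n$ must be even for a weight-$2$ weighing matrix to exist, which falls out of the component structure.
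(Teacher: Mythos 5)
Your proof is correct and rests on the same key observation as the paper's: two rows whose supports meet in exactly one coordinate have inner product $\pm 1$ and so cannot be orthogonal, forcing each row to be paired with exactly one other row on exactly two shared columns. The paper packages this as an induction that peels off one $2\times 2$ block at a time, whereas you extract all blocks at once as $4$-cycles of the $2$-regular bipartite support graph; the content is the same.
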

\begin{proof}
Let $W(n,2)$ denote a weighing matrix of order $n$ and weight $2$. We prove this theorem by induction on the order $n$ of $W(n,2)$. 

For $n = 2$, the matrix $\begin{bmatrix} 1 & 1\\ 1 & -1\end{bmatrix}$ is the only possible $2 \times 2$ orthogonal matrix up to permutations of columns with entries in $\{ 1, -1\}$. Therefore, $W(2,2) \cong \begin{bmatrix} 1 & 1\\ 1 & -1\end{bmatrix}$. 

Let us assume the induction hypothesis about all weighing matrices of order at most $n-2$ and weight $2$. 

Let $W \in \{0 , 1, -1 \}^{n \times n}$ be an orthogonal matrix such that each row of $W$ has exactly two non-zero entires. 
Since we are characterizing $W$ up to permutations of rows and columns, and negations of rows and columns, we can assume without loss of generality that the first row of $W$ is, 
$$ w_1 = (1, 1, 0, \ldots, 0). $$
Since $W$ is orthogonal, the non-zero entries of every other row, $w_i$ has even intersection with the non-zero entries of $w_1$, i.e. $$| \supp{w_i} \cap \supp{w_1} | \in \{0, 2 \}. $$

Let us consider the case when $| \supp{w_i} \cap \supp{w_1} | = 0$ for all $i \in  [n] \setminus \{1\}$. This would imply that $W$ has at most $n-1$ rows in total which contradicts the fact that $W$ is a $n \times n$ matrix. Therefore, there exists at least one row, say $w_2$ such that $| \supp{w_2} \cap \supp{w_1} | = 2$. Since $w_2$ is orthogonal to $w_1$ and it has exactly two non-zero entries, it is of the form $$w_2 = (1, -1, 0, \ldots, 0).$$ 

We note that there cannot exist any other row $w_3$ of $W$, such that $| \supp{w_3} \cap \supp{w_1} | = 2$ since it is not possible for such a vector to be orthogonal to both $w_1$ and $w_2$. Therefore, for every other row $w_i, i \in \{ 3, \cdots, n \}$, we have  $| \supp{w_i} \cap \supp{w_1} | = 0$. The weighing matrix is therefore of the form: $$ W \cong \begin{bmatrix} 1 & 1\\ 1 & -1\end{bmatrix} \otimes W' $$ where $W'$ is a weighing matrices of order at $n-2$ and weight $2$. The proof follows from the induction hypothesis. 
\end{proof}

\begin{theorem}\cite{chan86}\label{thm:weight-3-matrices}
A matrix $W$ is a weighing matrix of order $n$ and weight $3$ if and only if $W$ can be obtained from $\otimes_{i=1}^{n/4}M$ by negating some rows and columns and by interchanging some rows and columns. 
\end{theorem}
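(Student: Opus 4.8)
The plan is to mimic the inductive argument used for weight-$2$ weighing matrices, but now the base case and the case analysis are substantially richer because the rows have three nonzero entries instead of two. I would proceed by induction on the order $n$, which must be a multiple of $4$ (this divisibility can be extracted as a preliminary observation: summing $w_i w_j^T = 0$ type relations, or counting incidences, forces $4 \mid n$ for weight-$3$ weighing matrices). The base case $n=4$ requires showing that, up to permuting rows and columns and negating rows and columns, the only $4 \times 4$ weighing matrix of weight $3$ is $M$. For the base case I would normalize the first row to $(1,1,1,0)$; orthogonality forces every other row to meet $\supp(w_1)$ in an even number of positions, hence in exactly $2$ of the first three columns (it cannot be $0$, since then at most four such rows could be mutually supported on the single remaining column, and we need a full rank $4\times 4$ matrix), and then a short finite case check — normalizing signs column by column — pins down the remaining three rows to be (a column-negation of) the other three rows of $M$.

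For the inductive step, assume the claim for all orders less than $n$, and let $W$ be a weighing matrix of order $n \ge 8$ and weight $3$. Again normalize $w_1 = (1,1,1,0,\dots,0)$. The key structural claim is that the columns $\{1,2,3\}$ together with the rows whose support meets $\{1,2,3\}$ form an isolated $4\times 4$ block (after permutation), i.e.\ exactly four rows have support meeting $\{1,2,3\}$, each in exactly two of those three columns and in exactly one column outside, and those four rows are supported entirely within a common set of four columns. Granting this, $W \cong M' \otimes W'$ where $M'$ is a $4\times 4$ weight-$3$ weighing matrix (hence a signed column-permutation of $M$ by the base case) and $W'$ is a weight-$3$ weighing matrix of order $n-4$; the induction hypothesis finishes the proof.

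The main obstacle is establishing that isolation claim — i.e.\ controlling how the rows meeting columns $\{1,2,3\}$ interact. Concretely: from orthogonality with $w_1$, any other row meets $\{1,2,3\}$ in $0$ or $2$ positions; a counting argument (each of columns $1,2,3$ has exactly three nonzero entries, one of which is in $w_1$, so there are exactly six further nonzero entries distributed among columns $1,2,3$, i.e.\ exactly three rows besides $w_1$ meet $\{1,2,3\}$ if each meets it in $2$ positions — but wait, $6/2 = 3$, so together with $w_1$ that is four rows) pins down that precisely three further rows $w_2,w_3,w_4$ meet $\{1,2,3\}$, each in exactly two of those columns, each having its third nonzero entry outside. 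The delicate part is then showing those three extra nonzero entries all land in the \emph{same} outside column: mutual orthogonality of $w_2,w_3,w_4$ among themselves and with $w_1$ forces their restrictions to $\{1,2,3\}$ to be (signed versions of) $(1,-1,0),(1,0,-1),(0,1,-1)$ up to permutation, and then pairwise orthogonality of, say, $w_2$ and $w_3$ — which already cancel to $\pm 1$ on columns $\{1,2,3\}$ — requires their outside entries to sit in a common column with opposite signs; chasing this through all three pairs yields a single common fourth column and the sign pattern of $M$. I would write this sign-chase once carefully and note that the remaining rows (support disjoint from $\{1,2,3,\text{that fourth column}\}$) form $W'$, since any row meeting the fourth column but not $\{1,2,3\}$ would have to be orthogonal to $w_2,w_3,w_4$ while overlapping each in exactly one position, which is impossible for weight-$3$ rows. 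This isolation/sign-chase is the crux; everything else is bookkeeping and the two invocations of the induction hypothesis.
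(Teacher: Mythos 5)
Your proposal follows essentially the same route as the paper: induction on $n$, normalization of the first row to $(1,1,1,0,\dots,0)$, a structural/sign-chasing argument showing that the rows meeting its support form an isolated $4\times 4$ block equivalent to $M$ (with the base case $W(4,3)\cong M$ handled by a finite check), and an appeal to the induction hypothesis on the complementary block. Your counting via the column-weight condition (six remaining nonzero entries in columns $1,2,3$, hence exactly three further rows meeting that support, and the fourth column saturated by $w_2,w_3,w_4$) is a slightly cleaner way to pin down the block and its isolation than the paper's row-counting, but the argument is the same in substance and is correct.
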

\begin{proof}
Let $W(n,3)$ denote a weighing matrix of order $n$ and weight $3$. We prove this theorem by induction on the order $n$ of $W(n,3)$. 

For $n = 4$, from Lemma~\ref{lemma:w(4,3)} we have $W(4,3) \cong M$. 
Let us assume the induction hypothesis about all weighing matrices of order at most $n-4$ and weight $3$. 

Let $W \in \{0 , 1, -1 \}^{n \times n}$ be an orthogonal matrix such that each row of $W$ has exactly three non-zero entires. 
Since we are characterizing $W$ up to permutations of rows and columns, and negations of rows and columns, we can assume without loss of generality that the first row of $W$ is 
$$ w_1 = (1, 1, 1, 0, \ldots, 0).$$
Since $W$ is orthogonal, the non-zero entries of every other row, $w_i$ has even intersection with the non-zero entries of $w_1$, i.e. $$| \supp{w_i} \cap \supp{w_1} | \in \{0, 2 \}~ \text{ for all } i \in \{2, \ldots, n\}. $$

Let us consider the case when $| \supp{w_i} \cap \supp{w_1} | = 0$ for all $i \in  [n] \setminus \{1\}$. This would imply that $W$ has at most $n-2$ rows in total which contradicts the fact that $W$ is a $n \times n$ matrix. Therefore, there exists at least two rows, say $w_2, w_3 $ such that $| \supp{w_1} \cap \supp{w_2} | = 2$ and $| \supp{w_1} \cap \supp{w_3} | = 2$. 
Since these three vectors are mutually orthogonal and $\supp{w_1} = 2$, it follows that $| \supp{w_2} \cap \supp{w_3} | > 0$. Without loss of generality, these three vectors are of the following form:
$$\begin{bmatrix} 1&1&1&0 & 0&\cdots &0 \\
1& -1& 0& 1& 0& \cdots & 0\\
1 & 0 & -1& -1 & 0& \cdots & 0
\end{bmatrix}
$$
We observe that if $| \supp{w_i} \cap \supp{w_2} | = 0$ for all $i \in  [n] \setminus \{1, 3\}$, then the number of vectors in $W$ is at most $n-1$. Therefore, 
there exists at least one other row $w_4$ such that $| \supp{w_4} \ \cap \ \supp{w_2} | = 2$. Since $w_4$ is orthogonal to all $w_1, w_2$ and $w_3$,
the unique candidate for $w_4$ is of the form $(0, 1, -1, 1, 0, \ldots, 0)$. We note that if there exists another row, $w_5$ such that $| \supp{w_5} \ \cap \ \supp{w_j} | > 0 $ for any $j \in [4]$ , then it cannot be orthogonal to all four vectors $w_1, w_2, w_3$ and $w_4$. So, $| \supp{w_i} \ \cap \ \supp{w_j} | = 0 $ for any $j \in [4]$ and $i \geq 5$.

Therefore, $W(n,3) \cong M \otimes W'$, where $W'$ is a weighing matrix of order $n-4$ and weight $3$. It then follows from the induction hypothesis that 
$$W(n,3) \cong \otimes_i M. $$
\end{proof}

\begin{lemma}\label{lemma:w(4,3)}
A weighing matrix of order $4$ and weight $3$ is equivalent to $M$ up to permutations of rows and columns, and negations of rows and columns.
\end{lemma}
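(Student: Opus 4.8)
The plan is to argue directly by exhausting the combinatorial possibilities for the support pattern of a $4\times 4$ weighing matrix of weight $3$, and then pinning down signs up to the allowed operations. Let $W\in\{0,1,-1\}^{4\times 4}$ have exactly three nonzero entries in each row and column, with pairwise orthogonal rows. Each row has exactly one zero entry, so the support pattern of $W$ is the complement of a permutation matrix; by permuting rows and columns (both of which are allowed) I may assume the zeros sit on the diagonal, i.e. $W_{ii}=0$ for all $i$ and $W_{ij}\neq 0$ for $i\neq j$. Next, using the freedom to negate columns, I can normalize the first row to $w_1=(0,1,1,1)$, and using the freedom to negate rows $2,3,4$ I can assume $W_{21}=W_{31}=W_{41}=1$ (the first column is nonzero in exactly those three positions). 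This fixes three entries of the first column and all of the first row.

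The next step is to use orthogonality of $w_1$ with each later row. For $i\in\{2,3,4\}$, write row $i$ as having a $1$ in column $1$, a $0$ in column $i$, and entries $a,b\in\{1,-1\}$ in the remaining two columns; orthogonality with $w_1=(0,1,1,1)$ forces the sum of the two entries of $w_i$ in columns $\{2,3,4\}\setminus\{i\}$ to be $0$, i.e. $a=-b$. So each of rows $2,3,4$ is determined up to a single sign choice, giving at most $2^3=8$ candidate matrices. I would then impose the remaining orthogonality conditions ($w_2\perp w_3$, $w_2\perp w_4$, $w_3\perp w_4$): a short case check shows exactly which of the $8$ sign patterns survive, and that every surviving one is obtainable from one fixed solution by negating rows and columns. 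Finally I check that $M$ itself (whose zero is in position $(1,4)$, $(2,3)$, $(3,2)$, $(4,1)$ — a permutation pattern, consistent with the above after reindexing) satisfies all the constraints, so every weighing matrix of order $4$ and weight $3$ is equivalent to $M$.

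The main obstacle — really the only place requiring care rather than routine bookkeeping — is verifying that the sign patterns that satisfy all three cross-orthogonality constraints among $w_2,w_3,w_4$ form a single orbit under row/column negations (combined with the row/column permutations already used to place the zeros). One has to be careful that negating a column to normalize $w_1$ does not later conflict with the sign normalization of the first column, and that the counting of surviving sign patterns is done consistently; but since there are only finitely many ($\le 8$) candidates, this can be made completely explicit. I would present the surviving solution(s) as an explicit $4\times4$ matrix and exhibit the negation pattern of rows/columns taking it to $M$, which closes the argument.
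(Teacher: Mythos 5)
Your proposal is correct, and it is a mild but genuine reorganization of the paper's argument. The paper normalizes $w_1=(1,1,1,0)$ and then reconstructs $w_2,w_3,w_4$ one at a time purely from row weights and the orthogonality constraint $|\supp{w_i}\cap\supp{w_j}|\in\{0,2\}$; it never invokes the column-weight condition. You instead use the fact that each \emph{column} also has exactly three nonzero entries, so the zero pattern is a permutation matrix that can be moved to the diagonal, and only then normalize signs. This buys you a cleaner enumeration: after setting $w_1=(0,1,1,1)$ and the first column to $(0,1,1,1)^T$, orthogonality with $w_1$ forces $w_2=(1,0,s,-s)$, $w_3=(1,t,0,-t)$, $w_4=(1,u,-u,0)$ with $s,t,u\in\{\pm1\}$, and the remaining inner products give $1+st=0$, $1-su=0$, $1+tu=0$, i.e.\ $t=-s$ and $u=s$ (the third equation then holds automatically). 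So exactly two of your eight sign patterns survive, and they are exchanged by transposing columns $3,4$ and rows $3,4$; since $M$'s zero pattern is the antidiagonal permutation, reindexing shows $M$ lies in this orbit. The one step you defer is thus routine and does close, and your worry about the column negations (used to fix $w_1$) clashing with the row negations (used to fix column $1$) is unfounded because the latter do not touch row $1$. The only imprecision is your phrase that the surviving patterns are related ``by negating rows and columns'' --- in fact a permutation is needed --- but permutations are part of the allowed equivalence, so nothing is lost.
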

\begin{proof}
Let $W$ be a weighing matrix of order $4$ and weight $3$. Since each vector has weight at exactly $3$, we can assume without loss of generality, $w_1 = (1,1,1, 0)$. 
All rows are mutually orthogonal, therefore, $| \supp{w_i} \ \cap \ \supp{w_j} | \in \{0, 2\}$ and $| \supp{w_1} \ \cap \ \supp{w_i} | \neq 0$ for all $i$. 

Let us consider another row $w_2$ such that $ | \supp{w_1} \ \cap \ \supp{w_2} | = 2$.  So, $w_2 = (1, -1, 0, 1)$ up to permutations of coordinates. For any other row $w_3$, if $| \supp{w_1} \ \cap \ \supp{w_2} \ \cap \ \supp{w_3} | = 2$ then the orthogonality condition with either $w_1$ or $w_2$ is violated. Therefore, $w_3$ is of the form $w_3 = (1, 0, -1, -1)$. This forces $w_4 = (0, 1, -1, 1)$ and, hence $W \equiv M$. 
\end{proof}

\section{Orthogonal Lattices from Binary Codes}\label{sec: binary}
In this section we focus on lattices obtained from binary linear codes using Construction-A. In Section \ref{sec:decomp-2Z}, we show that any orthogonal lattice obtained from a binary linear code by Construction-A is equivalent to a product lattice whose components are one-dimensional or two-dimensional lattices. In Section \ref{sec:algo-2Z}, we show that given a lattice obtained from a binary linear code by Construction-A, there exists an efficient algorithm to verify if the lattice is orthogonal. 

\subsection{Decomposition Characterization}\label{sec:decomp-2Z}
We prove Theorem \ref{thm:binary-decomposition} in this subsection. 

\begin{proof}[Proof of Theorem \ref{thm:binary-decomposition}]
We show that $(1) \equiv (2)$ and $(2) \equiv (3)$ to complete the equivalence of the three statements.

\noindent $(1) \equiv (2)$: We show that $L_C = C+ 2\Z^n$  is orthogonal if and only if it decomposes into direct product of lower dimensional orthogonal lattices, $L_C  \cong \otimes_i L_i$.

If $L_C  \cong \otimes_i L_i$ such that each $L_i$ is orthogonal, then $L_C$ is also orthogonal. This is because $L_C$ would have a block diagonal orthogonal basis where each block is in itself orthogonal or a $1 \times 1$ matrix. 

We prove the other direction of the equivalence by induction on the dimension, $n$, of the lattice $L_C$. 
For the base case consider $n=1$. Since $L$ is integral, contains $2\Z$ and is of the form $C+2\Z$ for some binary linear code $C$, it follows that $L$ has to be either $\Z$ or $2\Z$. 

Let us assume the induction hypothesis for all $n-1$ or lower dimensional orthogonal lattices obtained from binary linear codes using Construction-A.\\

Let $L_C$ be an $n$-dimensional orthogonal lattice and $B$ be an orthogonal basis of $L_C$ with the rows being basis vectors. Since $L_C$ is an integral lattice, $B$ has only integral entries. The next two claims summarize certain properties of the entries of the basis matrix $B$.

\begin{claim}\label{claim:property1-B-2Z}
For every row $b$ of $B$ and for every $j\in [n]$, we have that $2| b_j |\in \{0,\|b\|^2,2\|b\|^2\}$.
\end{claim}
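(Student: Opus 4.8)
The plan is to use two ingredients only: that $L_C = C + 2\Z^n$ contains $2\Z^n$ as a sublattice, and that the given basis $B$ is orthogonal (and integral, since $L_C \subseteq \Z^n$). Fix a row $b = b^{(k)}$ of $B$ and a coordinate $j \in [n]$. First I would observe that $2 e_j \in 2\Z^n \subseteq L_C$, so since the rows $b^{(1)}, \dots, b^{(n)}$ of $B$ form a basis of $L_C$ we may write $2 e_j = \sum_i a_i b^{(i)}$ with all $a_i \in \Z$. Taking the inner product of both sides with $b$ and using that the rows of $B$ are pairwise orthogonal annihilates every term except $i = k$, giving $\langle 2 e_j, b\rangle = a_k\,\|b\|^2$. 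Since $\langle 2 e_j, b\rangle = 2 b_j$, this yields $2 b_j = a_k\,\|b\|^2$; in particular $\|b\|^2$ divides $2 b_j$.

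It then remains to show $|a_k| \le 2$, which turns the divisibility into the desired finite list. If $b_j = 0$ then $2|b_j| = 0$ and we are done, so assume $b_j \neq 0$. Because $L_C$ is integral, $b_j$ is a nonzero integer, hence $|b_j| \ge 1$ and therefore $b_j^2 \ge |b_j|$. Consequently $\|b\|^2 = \sum_{l \in [n]} b_l^2 \ge b_j^2 \ge |b_j|$, so $2|b_j| \le 2\|b\|^2$. Combining this with $2|b_j| = |a_k|\,\|b\|^2$ forces $|a_k| \in \{0,1,2\}$, i.e. $2|b_j| \in \{0,\ \|b\|^2,\ 2\|b\|^2\}$, which is exactly the claim.

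I do not expect a genuine obstacle here; the argument is short and the only points requiring care are bookkeeping. The factor $2$ in the statement is forced precisely because it is $2 e_j$ (not $e_j$) that is guaranteed to lie in $L_C$, and the step that converts $\|b\|^2 \mid 2 b_j$ into three possibilities is the elementary inequality $b_j^2 \ge |b_j|$ valid for every nonzero integer. It is also worth noting that the argument uses nothing about $C$ being binary beyond the containment $2\Z^n \subseteq L_C$; the analogue used later for ternary codes will instead invoke $3 e_j \in L_C$.
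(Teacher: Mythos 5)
Your proof is correct and is essentially the paper's argument: both derive $\|b\|^2 \mid 2b_j$ from the facts that $2e_j \in L_C$ and that $B$ is an orthogonal integral basis (the paper phrases this via $2B^{-1} = 2B^TD^{-1}$ being integral, you via taking inner products with $2e_j = \sum_i a_i b^{(i)}$), and then both conclude using $|b_j| \le \|b\|^2$. Your explicit justification of that last inequality via $b_j^2 \ge |b_j|$ for nonzero integers is a small clarity improvement over the paper, which asserts it without comment.
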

\begin{proof}
Since $B$ is an orthogonal basis, $BB^T = D$, where $D$ is the diagonal matrix with $d_i = \| b^{(i)} \|^2$, where $b^{(i)}$ denotes the $i^{th}$ basis vector.  
$$
D=\left[
\begin{array}{ccccc}
\|b^{(1)}\|^2 & 0 & 0 & \cdots & 0\\ 
0 & \|b^{(2)}\|^2 & 0 & \cdots & 0 \\
\vdots& \vdots & \ddots &  & \vdots\\
0 & 0 & 0 & \cdots & \|b^{(n)}\|^2 \\
\end{array}\right]
$$

We know that $2\mathbb{Z}^n \subseteq L_C$ so, $2e_j \in L_C$ for every $j \in [n]$. Therefore, there is an integral matrix $X \in \mathbb{Z}^{n \times n}$ such that   $XB  =  2I_n$, i.e.  $2  B^{-1} \in \mathbb{Z}^{n \times n}$. Since $B$ is an orthogonal basis,
\[ B^{-1} = B^T D^{-1} \in \frac12\mathbb{Z}^{n \times n}. \]

Each column of $B^T D^{-1}$ is given by $b / \|b\|^2$, where $b$ is a basis vector. Therefore, for any $j \in [n], 2b_j$ is a multiple of $\|b\|^2$.  Thus we have 
\begin{equation*} 
2b_j \equiv 0 \text{ mod } \|b\|^2 \text{ for all } j\in [n], \text{ and rows $b$ of $B$}. 
\end{equation*} 

Since $b_j$ is integral and $ \lvert b_j \rvert \leq \|b\|^2$ for every $j\in [n]$,  it follows that $2|b_j|\in\{0, \|b\|^2, 2\|b\|^2\}$. 
\end{proof}

\begin{claim}\label{claim:properties-of-B-2Z}
Let $b$ be a row of $B$.
\begin{enumerate}
\item[(1)] If there exists $j\in [n]$ such that $2|b_j |=2\|b\|^2$, then $b_j=\pm 1$ and $b_{j'}=0$ for every $j'\in [n]\setminus \{j\}$.
\item[(2)]  If there exists $j\in [n]$ such that $2|b_j|=\|b\|^2$ and $b_j=\pm 2$, then $b_{j'}=0$ for every $j'\in [n]\setminus \{j\}$.
\item[(3)]  If there exists $j\in [n]$ such that $2|b_j |=\|b\|^2$ and $b_j=\pm 1$, then there exist ${j_1} \in [n]\setminus\{j\}$, such that $\lvert b_{j_1} \rvert = 1$ and $b_{j'}=0$ for every $j'\in [n]\setminus\{j,j_1\}$.
\end{enumerate}
\end{claim}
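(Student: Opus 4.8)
The plan is to exploit the two facts already available: every row $b$ of $B$ satisfies $2|b_j|\in\{0,\|b\|^2,2\|b\|^2\}$ (Claim~\ref{claim:property1-B-2Z}), and the code $C = L_C \bmod 2$ is binary linear, so reducing any row of $B$ modulo $2$ gives a codeword. I would analyze the three cases according to which regime the large coordinate falls into, and in each case use the norm identity $\|b\|^2 = \sum_{j'} b_{j'}^2$ together with the divisibility constraint on the \emph{other} coordinates to force all of them to vanish (or to leave exactly one survivor).

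First, for (1): if $2|b_j| = 2\|b\|^2$, i.e.\ $|b_j| = \|b\|^2$, then since $b_j$ is a nonzero integer and $|b_j|\le \|b\|^2$ with equality forcing $b_j^2 = \|b\|^2 = |b_j|$, we get $|b_j| = 1$, hence $\|b\|^2 = 1$. The norm identity $\sum_{j'} b_{j'}^2 = 1$ then immediately forces $b_{j'} = 0$ for all $j'\ne j$. Second, for (2): if $2|b_j| = \|b\|^2$ and $b_j = \pm 2$, then $\|b\|^2 = 4$, so $\sum_{j'\ne j} b_{j'}^2 = 4 - 4 = 0$, giving $b_{j'} = 0$ for all $j'\ne j$. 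These two cases are essentially just arithmetic with the norm.

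The interesting case is (3): $2|b_j| = \|b\|^2$ and $b_j = \pm 1$, so $\|b\|^2 = 2$ and $\sum_{j'\ne j} b_{j'}^2 = 1$, which forces exactly one other coordinate $j_1$ with $|b_{j_1}| = 1$ and all the rest zero. This already gives the statement of (3) directly from the norm count; the divisibility facts of Claim~\ref{claim:property1-B-2Z} are what guarantee we are in one of these clean regimes in the first place (they rule out, e.g., a coordinate strictly between $1$ and $\|b\|^2$ in absolute value that is not a half-multiple of $\|b\|^2$). The main obstacle — really the only place any care is needed — is making sure the case split in Claim~\ref{claim:property1-B-2Z} is invoked correctly: one must note that for each $j'\ne j$ the value $2|b_{j'}|$ is again in $\{0,\|b\|^2,2\|b\|^2\}$, but since $\|b\|^2$ is already pinned to $1$, $4$, or $2$ respectively, in the first two cases the only consistent choice is $b_{j'} = 0$, and in the third case $2|b_{j'}|\in\{0,2,4\}$ combined with $b_{j'}^2 \le 1$ forces $|b_{j'}|\in\{0,1\}$, after which the norm bound $\sum_{j'\ne j} b_{j'}^2 = 1$ finishes it. I would write this out as three short paragraphs, one per item, each a two- or three-line computation.
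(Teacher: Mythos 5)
Your proposal is correct and follows essentially the same route as the paper: all three items are settled by the norm identity $\|b\|^2=\sum_{j'}b_{j'}^2$ combined with the hypothesis on $b_j$ (forcing $\|b\|^2$ to be $1$, $4$, or $2$ respectively), exactly as in the paper's rearrangement $\lvert b_j\rvert(2-\lvert b_j\rvert)=\sum_{i\neq j}b_i^2$. The extra appeals to the mod-$2$ codeword structure and to re-invoking Claim~\ref{claim:property1-B-2Z} on the other coordinates are harmless but unnecessary, as you yourself note.
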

\begin{proof}

\begin{enumerate}
\item[(1)]  Since,  $\| b\|^2 = \sum_{i=1}^n b_i ^2$, and each $b_i \in \mathbb{Z}$, we conclude that $\lvert b_j \rvert = 1$ and the remaining coordinates in $b$ have to be $0$, i.e $b_{j'} = 0$ for all $j'\in[n]\setminus\{j\}$.
\item[(2)]   Follows from $2|b_j|=\|b\|^2$ and $b$ being integral. 
\item[(3)]   We can re-write the condition $2 \lvert b_j \rvert  = \|b\|^2$ as $2\lvert b_j \rvert  = \sum_{i=1}^n b_i^2$. Rearranging the terms, we have
\begin{equation*}
\lvert b_j \rvert ~(2 -  \lvert b_j \rvert) = \sum_{i \neq j} b_i^2. 
\end{equation*}
If $b_j =\pm 1$, then $\sum_{i \neq j} b_i^2 = 1$. Further, $b$ is integral. Hence, $b$ has exactly $1$ other non-zero coordinates $b_{j_1}$, $j\neq j_1$, such that $\lvert b_{j_1} \rvert = 1$. 
\end{enumerate}
\end{proof}

Using the properties of the orthogonal basis $B$ of $L_C$ given in Claims \ref{claim:property1-B-2Z} and \ref{claim:properties-of-B-2Z}, we show that $B$ is equivalent (up to permutations of its columns) to a block diagonal matrix, i.e
\[ B \cong \begin{bmatrix} B_1 & 0 & \cdots & 0 \\
0& B_2 & \cdots & 0 \\
\vdots& & \ddots & \vdots \\
0 & 0 & \cdots & B_k
\end{bmatrix}\]
where each $B_i$ is either the $1\times 1$ matrix $\begin{bmatrix} 1 \end{bmatrix}$ or the $1\times 1$ matrix $\begin{bmatrix}2\end{bmatrix}$ or the $2\times 2$ matrix $\begin{bmatrix}1&1\\1&-1 \end{bmatrix}$. It follows that $L_C \cong \otimes_i L_i$ such that $B_i$ is the basis for the lower dimensional lattice $L_i$.

Let us pick a row $b$ of $B$ with the smallest support. 
Fix an index $j \in [n]$ to be the index of a non-zero entry with minimum absolute value in $b$, i.e. $j := \arg \min_k \{ \lvert b_k \rvert \}$. Since $b$ is a row of a basis matrix, $b$ cannot be the all-zeroes vector and therefore there exists a $j \in [n]$ such that $|b_j|>0$. Since we are only interested in equivalence (that allows for permutation of coordinates), we may assume without loss of generality that $j=1$ by permuting the coordinates. By Claim \ref{claim:property1-B-2Z}, we have that $2 \lvert b_1 \rvert \in \{ \|b\|^2,2\|b\|^2 \}$. We consider each of these cases separately. 

\begin{enumerate}[leftmargin=*]
\item Suppose $2 \lvert b_1 \rvert  = 2 \|b\|^2$. 
By Claim \ref{claim:properties-of-B-2Z}(1), $b = (\pm 1, 0,  \ldots, 0 )$. Since $B$ is an orthogonal basis, $\inner{b, b'} = 0 \Rightarrow b'_1 = 0$ for all $b' \neq b \in B$. The orthogonality of $B$ therefore forces all other basis vectors to take a value of $0$ in the $1$'st coordinate. Thus $B$ is of the form

\[B = 
\left(
\begin{array}{c|c}
  \pm1 & 0 \cdots 0 \\ \hline
  0 & \raisebox{-15pt}{{\Large\mbox{{$B'$}}}} \\[-4ex]
  \vdots & \\[-0.5ex]
  0 &
\end{array}
\right).
\]

Therefore, we obtain $L_C \cong \Z \otimes L'$, where $L'$ is an orthogonal $(n-1)$-dimensional lattice generated by the basis matrix restricted to the coordinates other than $1$, say, $B'$.  
From Claim~\ref{lemma:product_codes_lattices}, it follows that $L' = C' + 2 \Z^{n-1}$ for some ternary linear code $C' \subseteq \mathbb{F}_2^{n-1}$. Thus $L'$ satisfies the induction hypothesis and we have the desired decomposition.


\item Suppose $2 \lvert b_1 \rvert  = \|b\|^2$. We can re-write this condition as $2\lvert b_1 \rvert  = \sum_{i=1}^n b_i^2$. Rearranging the terms, we have
\[
\lvert b_1 \rvert ~(2 -  \lvert b_1 \rvert) = \sum_{i \neq 1} b_i^2. 
\]
Since the RHS is a sum of squares, it should be non-negative.

\begin{description}[leftmargin=*, listparindent=\parindent]
\item[(i)] If RHS is $0$, then $b_1 = \pm  2$ and therefore, it follows from Claim~\ref{claim:properties-of-B-2Z}(2) that $b = (\pm 2, 0,  \ldots, 0)$. The orthogonality of $B$ forces all other basis vectors to take a value of $0$ at the $1$'st coordinate. 
\[B = 
\left(
\begin{array}{c|c}
  \pm2 & 0 \cdots 0 \\ \hline
  0 & \raisebox{-15pt}{{\Large\mbox{{$B'$}}}} \\[-4ex]
  \vdots & \\[-0.5ex]
  0 &
\end{array}
\right)
\]
Therefore, we obtain $L_C\cong 2\Z \otimes L'$, where $L'$ is an orthogonal $(n-1)$-dimensional lattice generated by the basis matrix restricted to the coordinates other than $1$, say $B'$. 
From Claim~\ref{lemma:product_codes_lattices}, it follows that $L' = C' + 2 \Z^{n-1}$ for some ternary linear code $C' \subseteq \mathbb{F}_2^{n-1}$. Thus $L'$ satisfies the induction hypothesis and we have the desired decomposition. 
				
\item[(ii)] If RHS is strictly positive, then $|b_1|\in (0,2)\cap \Z=\{1\}$. By Claim \ref{claim:properties-of-B-2Z}(3), we have that $b$ has exactly two non-zero coordinates and they are $\pm 1$. By permuting the coordinates of $B$, we may assume that $b \equiv (\pm 1, \pm 1, 0, \cdots, 0)$. 		 

Since we picked the row $b$ to be the one with the smallest support, it follows that every row has at least $2$ non-zero coordinates. 
By Claims \ref{claim:property1-B-2Z} and \ref{claim:properties-of-B-2Z}, this is possible only if 
for every other row $b'$, there exists $j'\in[n]$ such that $2|b'_{j'} |=\|b'\|^2$. By Claim \ref{claim:properties-of-B-2Z}(1) and (2), every other row $b'$ has all its coordinates in $\{0,\pm 1, \pm 2\}$. 
By Claim \ref{claim:properties-of-B-2Z}(2), every other row $b'$ has none of its coordinates in $\{\pm 2\}$. Therefore, every other row $b'$ has all its coordinates in $\{0,\pm 1\}$. By Claim \ref{claim:properties-of-B-2Z}(3), every row of the basis matrix has the same form as $b$: they have exactly two non-zero entries each of which is $\pm 1$. 

Since the rows of the basis matrix are orthogonal, it follows that the basis matrix $B$ is a weighing matrix of order $n$ with weight $2$. By Theorem \ref{thm:weight-2-matrices} the matrix $B$ is obtained from $\otimes_{n/2} \begin{bmatrix}1&1\\1&-1 \end{bmatrix}$ by either negating some rows or columns and by interchanging rows or columns. We recall that interchanging or negating the rows of the basis matrix of a lattice preserves the basis property while interchanging columns is equivalent to permuting the coordinates. Hence $L_C = L(B)\cong \otimes_{i=1}^{n/2}L(\begin{bmatrix}1&1\\1&-1 \end{bmatrix})$.
\end{description}
\end{enumerate}

\noindent $(2) \equiv (3)$: We now show that $L_C$ decomposes into direct product of lower dimensional lattices, $L_C \cong \otimes_i L_i$ if and only if the code $C$ also decomposes, $C \cong \otimes_i C_i$.

Let $L_C\cong \otimes_i L_i$. Without loss of generality, we can consider $L_C = \otimes_i L_i$. We have $C = L_C \mod 2= \otimes_i L_i \mmod 2$. 
We observe that if $L_i$ has dimension $n_i$, then $L_i\supseteq 2\Z^{n_i}$. Therefore, $C_i=L_i\mmod 2$ is a binary code. 
Let $C_i := L_i \mmod 2$ for every $i$. Then $C=\otimes_i C_i$. (If $c\in C$, then $c\in L$ and hence the projection of $c$ to the subset of coordinates corresponding to $L_i$ is in $C_i$. Let $c_i\in C_i$ for every $i$. The concatenated vector $\otimes_i c_i$ is in $\otimes_i L_i \mod 2$ and hence is in $C$).

To show the other direction of the equivalence, let $C \cong \otimes_i C_i$, where each $C_i \subseteq \mathbb{F}_2^{n_i}$ and $n = \sum_i n_i$. Therefore $L_C = C+ 2\mathbb{Z}^n  \cong \otimes_i C_i + 2\Z^n \cong \otimes_i  (C_i + 2 \mathbb{Z}^{n_i})$, since $\mathbb{Z}^n \cong \otimes_i \mathbb{Z}^{n_i}$.

\end{proof}


\subsection{Algorithm}\label{sec:algo-2Z}

Theorem~\ref{thm:binary-decomposition} shows that a lattice of the form $C+ 2\mathbb{Z}^n$ is orthogonal if and only if the underlying code decomposes into direct product of binary linear codes isomorphic to $\{ 0, 1\}$ or $\{0\}$ or the two dimensional code $\{00, 11 \}$. We now give a polynomial time algorithm which finds the decomposition of the code $C$ into the component codes, $C_i$, if there exists one. Therefore, if the lattice $L_C$ is orthogonal, the algorithm decides in polynomial time if it is orthogonal and also gives the orthogonal basis for the lattice. 

The algorithm recursively attempts to find the component codes. If it is unable to decompose the code at any stage, then it declares that $L_C$ is not orthogonal. At every step we check if $C \cong \{0,1\} \times C'$ or $\{0\}\times C'$ or  $\{ 00, 11 \} \times C'$ and then recurse on $C'$.

\begin{proof}[Proof of Theorem~\ref{thm:algo-binary}]
Given a basis for $L_C$ as input, we first compute the generator for $C$. From Theorem~\ref{thm:binary-decomposition}, we know that if $L_C$ is orthogonal, then $C \cong \otimes_i C_i$ where each $C_i$ is either the length-1 code $\{0, 1\}$ or the length-1 code $\{0\}$ or the $2$-dimensional code \{00, 11 \}. 

The algorithm therefore in each step decides if $C \cong \{0,1\} \otimes C'$ or $C\cong \{0\}\otimes C'$ or $C \cong \{ 00, 11\} \otimes C'$.
Theorem~\ref{theorem:code-decomposition-01-2Z} shows that using Algorithm~\ref{algo1-2Z} we can check in $O(n^4)$ time, if $C \cong \{0,1\} \otimes C'$. The same algorithm can be modified to check in $O(n^4)$ time, if $C\cong \{0\}\otimes C'$. Theorem~\ref{theorem:code-decomposition-Cp-2Z} shows that Algorithm~\ref{algo2-2Z} can verify if $C \cong \{00, 11\} \otimes C'$ in $O(n^5)$ time. If any one of the algorithms finds a decomposition, then we recurse on the lower dimensional code $C'$ to find a further decomposition. We recurse at most $n$ times. If all the algorithms fail to find a decomposition, then $L_C$ is not orthogonal. Therefore, it takes $O(n^6)$ time to decide if $L_C$ is orthogonal. 
\end{proof}

We now describe the individual algorithms to verify if $C \cong \{0,1\} \otimes C'$ or $C\cong \{0\}\otimes C'$ or $C \cong \{00, 11\} \otimes C'$. 

\begin{algorithm}
  \caption{$\mathbf{: decompose-length-1(G)}$:}
  \label{algo1-2Z}
  \textbf{Input}: $G=\{g_1,\ldots, g_n\} \in \F_2^{n}$ (A generator for the code $C$)\\ 

  \begin{algorithmic}[1]
  \FOR {$j \in \{ 1, \cdots, n \} $ }
  	\STATE Let $G' \leftarrow$ projection of vectors in $G$ on coordinates $[n] \setminus \{j\}$
  	\STATE For $g\in G'$, define $g^0,g^1 \in \F_2^n$ as the $n$-dimensional vectors obtained by extending $g$ using $0$ and $1$ along the $j$'th coordinate respectively. 
	\IF{ $g^0,\ g^1 \in C$ for all $g \in G'$}
		\STATE \textbf{return } $j$
	\ENDIF
  \ENDFOR
  \STATE \textbf{return } FAIL
  \end{algorithmic}
\end{algorithm}
 
\begin{theorem}
\label{theorem:code-decomposition-01-2Z}
Let $C$ be a binary linear code and $G=\{g_1,\ldots, g_n\}\in \F_2^{n\times n}$ be its generator. Then Algorithm \ref{algo1-2Z} decides if $C\cong \{0,1\}\otimes C'$ for some linear code $C'\subseteq \F_2^{n-1}$ and if so outputs the coordinate corresponding to the direct product decomposition. Moreover the algorithm runs in time $O(n^4)$.

\end{theorem}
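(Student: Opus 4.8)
The plan is to verify that Algorithm~\ref{algo1-2Z} is correct and to bound its running time. First I would establish the structural equivalence that drives the algorithm: a binary linear code $C \subseteq \F_2^n$ satisfies $C \cong \{0,1\} \otimes C'$ for some $C' \subseteq \F_2^{n-1}$ if and only if there is a coordinate $j \in [n]$ such that $C$, restricted to the coordinates $[n]\setminus\{j\}$, equals some code $C'$, \emph{and} $C$ contains both the extension of every codeword of $C'$ by $0$ in coordinate $j$ and the extension by $1$. Equivalently, writing $C'$ for the projection of $C$ onto $[n]\setminus\{j\}$ (which is always a linear code since projection is linear), the condition $C \cong \{0,1\}\otimes C'$ holds exactly when $e_j \in C$ and the projection map onto $[n]\setminus\{j\}$ is injective on $C$ — or, stated the way the algorithm checks it, when $g^0, g^1 \in C$ for every $g$ in a generating set $G'$ of $C'$. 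I would prove both directions of this: the forward direction is immediate from the definition of $\otimes$; for the reverse direction, if $g^0,g^1 \in C$ for all generators $g$ of $G'$, then by linearity $g^0, g^1 \in C$ for every $g \in C'$, so $\{0,1\}\otimes C' \subseteq C$; and since $C$ projects into $C'$ by definition of $C'$, every codeword of $C$ is the $0$- or $1$-extension of some element of $C'$ (because $e_j = g^1 - g^0 \in C$ for any fixed $g$), giving $C \subseteq \{0,1\}\otimes C'$.

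Given that equivalence, correctness of the algorithm is essentially a restatement: the loop tries each coordinate $j$, forms $G'$ as the projection of the input generators onto $[n]\setminus\{j\}$ (so $G'$ generates the projected code $C'$), and checks whether both extensions $g^0,g^1$ of each $g \in G'$ lie in $C$; it returns $j$ iff this holds. By the structural lemma this happens for some $j$ iff $C \cong \{0,1\}\otimes C'$, and the $j$ returned is a valid decomposition coordinate; otherwise it returns \textsf{FAIL}. The variant for $C \cong \{0\}\otimes C'$ is the same except one only checks that $g^0 \in C$ for all $g \in G'$ (equivalently, that coordinate $j$ is identically zero on $C$), and the same argument applies verbatim.

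For the running time: the outer loop runs $n$ times. For each $j$, computing the projection $G'$ costs $O(n^2)$. Membership of a vector $v$ in the linear code $C$ — given a generator matrix $G$ — can be tested in $O(n^2)$ time (e.g.\ reduce $G$ to row echelon form once in $O(n^3)$ preprocessing, or for each test solve the linear system; even the naive bound of $O(n^3)$ per membership test, times $O(n)$ vectors to check, times $O(n)$ outer iterations, gives $O(n^5)$, but a single $O(n^3)$ echelon-form preprocessing makes each of the $O(n)$ membership tests cost only $O(n^2)$). So the total is $O(n)$ iterations $\times$ $O(n)$ vectors per iteration $\times$ $O(n^2)$ per membership test $=O(n^4)$, which matches the claimed bound.

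The main obstacle, such as it is, is getting the structural equivalence exactly right — in particular, being careful that the projection $C'$ is the \emph{same} linear code regardless of which generating set one projects, and that checking the extension condition only on generators suffices by linearity (so that one does not need to enumerate all of $C$, which would be exponential). Once that is pinned down, correctness and the time bound are routine.
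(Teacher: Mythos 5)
Your proof is correct and follows essentially the same route as the paper's: the containment $C \subseteq \{0,1\}\otimes C'_{\overline{j}}$ is automatic from the definition of the projection, the reverse containment is checked only on the projected generators and extended to all of $C'_{\overline{j}}$ by linearity, and the $O(n)\times O(n)\times O(n^2)=O(n^4)$ accounting is identical. One aside in your write-up is mis-stated --- the decomposition is \emph{not} characterized by the projection onto $[n]\setminus\{j\}$ being injective on $C$ (it is exactly two-to-one when the decomposition exists, since $e_j=g^1-g^0\in C$) --- but you never use that phrasing; the condition you actually verify is the correct one.
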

\begin{proof}
For $j\in [n]$, let $C_{\overline{j}}'\subseteq \F_2^{n-1}$ be the projection of $C$ on the indices $[n] \setminus \{j\}$ and for a vector $c\in C_{\overline{j}}'$, let $c^0,c^1 \in \F_2^n$ be extensions of $c$ using $0,1$ respectively along the $j$'th coordinate. We note that $C \cong \{0,1\} \otimes C'$ for some binary linear code $C'$ if and only if there exists an index  $j \in [n]$, such that  
\begin{equation*} \label{check1} 
C = \left\{ c^0, c^1 \mid c \in C_{\overline{j}}' \right\}. 
\end{equation*}
From the definition of $C_{\overline{j}}'$, it follows that $C\subseteq \{ c^0, c^1 \mid  c \in C_{\overline{j}'} \}$ up to a permutation of coordinates. So, the algorithm just needs to verify if the other side of the containment holds for some $j \in [n]$.

Let $G'$ be the set of vectors of $G$ projected on the coordinates $[n] \setminus \{j\}$. Algorithm~\ref{algo1-2Z} verifies if $g^0$ and $g^1$ are codewords in $C$, for every vector $g \in G'$. We now show that this is sufficient.
Since $C$ is a code, if $g^0$, $g^1$ $\in C$ for every $g\in G'$, then all linear combinations of these vectors are also in $C$.  Therefore, $\{ c^0, c^1 \mid c \in C_{\overline{j}}' \} \subseteq C$.

It takes $O(n^2)$ time to compute a parity check matrix from the generator $G$ and $O(n^2)$ time to verify if an input vector is a codeword using the parity check matrix.  
For every possible choice of the index $j$, Algorithm~\ref{algo1-2Z} checks if each of the $2n$ vectors of the form $g^0,g^1$ are in $C$. Therefore, Algorithm~\ref{algo1-2Z} takes $O(n^4)$ time to decide if $C \cong \{0, 1\} \otimes C'$. 
\end{proof}

\begin{algorithm}
  \caption{ $\mathbf{: decompose-length-2(G)}$:}
  \label{algo2-2Z}
  \textbf{Input}: $G=\{g_1,\ldots, g_n\} \in \F_2^{n}$ (A generator for the code $C$)\\

  \begin{algorithmic}[1]
  \FOR{ $j_1, j_2 \in \{1, 2, \cdots, n \} $}
	\STATE Let $G' \leftarrow$ projection of vectors in $G$ on coordinates $[n] \setminus \{ j_1, j_2 \}$
	\STATE Let $G'' \leftarrow$ projection of vectors in $G$ on coordinates $\{ j_1, j_2 \}$
  	\IF{ Code generated by $G'' \equiv \{ 00, 11\} $}
		\STATE For $g \in G'$ define $g^{00}, g^{11} \in \F_2^n$ be $n$-dimensional vectors obtained by extending $g$  using $00$ and $11$ along the $j_1, j_2$ coordinates. 
		\IF{ $g^{00}, g^{11} \in C$ for all $g \in G'$}
			\STATE \textbf{ return } $j_1, j_2$
		\ENDIF
	\ENDIF
  \ENDFOR
  \STATE \textbf{ return } FAIL
  \end{algorithmic}
\end{algorithm}

\begin{theorem}
\label{theorem:code-decomposition-Cp-2Z}
Let $C$ be a binary linear code and $G=\{g_1,\ldots, g_n\}\in \F_2^{n\times n}$ be its generator. 
Then Algorithm \ref{algo2-2Z} decides if $C \cong \{00, 11\} \otimes C'$ for some linear codes $C'\subseteq \F_2^{n-2}$ and if so outputs the coordinates corresponding to the direct product decomposition. Moreover the algorithm runs in time $O(n^5)$.
\end{theorem}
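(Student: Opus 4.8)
The plan is to mirror the structure of the proof of Theorem~\ref{theorem:code-decomposition-01-2Z}, but now over a pair of coordinates $\{j_1,j_2\}$ instead of a single coordinate. First I would observe that $C \cong \{00,11\} \otimes C'$ for some binary linear code $C' \subseteq \F_2^{n-2}$ if and only if there exists a pair of indices $j_1,j_2 \in [n]$ such that: (a) the projection of $C$ onto the coordinates $\{j_1,j_2\}$ equals the code $\{00,11\}$ (equivalently, the generator $G$ projected to these coordinates generates $\{00,11\}$), and (b) writing $C'_{\overline{j_1j_2}}$ for the projection of $C$ onto $[n]\setminus\{j_1,j_2\}$, we have $C = \{ c^{00}, c^{11} \mid c \in C'_{\overline{j_1j_2}} \}$, where $c^{00}$ and $c^{11}$ denote the two extensions of $c$ that place $00$ resp.\ $11$ on coordinates $j_1,j_2$. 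Condition (a) guarantees that the two coordinates are "tied together" and not independently free, ruling out the $\{0,1\}\otimes\{0,1\}\otimes C''$ case; condition (b) is the genuine product-decomposition check.

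Next I would argue that the containment $C \subseteq \{c^{00},c^{11} \mid c \in C'_{\overline{j_1j_2}}\}$ always holds up to permutation of coordinates, provided (a) holds: every codeword of $C$, restricted to $\{j_1,j_2\}$, lies in $\{00,11\}$ by (a), and its restriction to the complement lies in $C'_{\overline{j_1j_2}}$ by definition, so it is one of the two listed extensions of that restriction. Hence only the reverse containment must be verified, and by linearity it suffices to check it on a generating set: if $g^{00}, g^{11} \in C$ for every $g \in G'$ (the projection of $G$ onto $[n]\setminus\{j_1,j_2\}$), then all $\F_2$-linear combinations are in $C$, so $\{c^{00},c^{11} \mid c \in C'_{\overline{j_1j_2}}\} \subseteq C$. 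This is exactly what Algorithm~\ref{algo2-2Z} tests, after first checking condition (a) via the code generated by $G''$.

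For the running time: precompute a parity-check matrix for $C$ from $G$ in $O(n^2)$ (or $O(n^3)$ with a generous bound) time, which lets us test membership of any vector in $C$ in $O(n^2)$ time. The algorithm loops over $O(n^2)$ pairs $(j_1,j_2)$; for each pair, checking whether $G''$ generates $\{00,11\}$ takes $O(n)$ time, and checking whether the $2n$ vectors $g^{00}, g^{11}$ (one pair per generator $g \in G'$) lie in $C$ takes $O(n)\cdot O(n^2) = O(n^3)$ time. This yields $O(n^2)\cdot O(n^3) = O(n^5)$ overall, as claimed.

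The only subtle point — and the step I would be most careful about — is the necessity direction: arguing that if $C$ genuinely decomposes as $\{00,11\}\otimes C'$ (up to a coordinate permutation), then the \emph{specific} pair of coordinates realizing that decomposition will be found by the loop, and moreover that condition (a) correctly distinguishes this case from a spurious pair where, say, both coordinates happen to be free single-bit factors. The resolution is that if $C \cong \{00,11\}\otimes C'$ then for the correct pair $(j_1,j_2)$ the projection onto those two coordinates is exactly $\{00,11\}$ (not all of $\F_2^2$ and not $\{00\}$), so condition (a) holds, and then the generator-set check in (b) succeeds because $G$ projected appropriately generates the decomposed structure. Since we recurse afterwards (in the proof of Theorem~\ref{thm:algo-binary}), it is enough that \emph{some} valid pair is returned whenever one exists, which the exhaustive loop guarantees.
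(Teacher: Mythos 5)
Your proposal is correct and follows essentially the same route as the paper's proof: check that the projection onto the candidate pair $\{j_1,j_2\}$ generates exactly $\{00,11\}$, note that one containment of $C = \{c^{00},c^{11} \mid c \in C'_{\overline{j_1 j_2}}\}$ is then automatic, verify the reverse containment only on the generating set $G'$ by linearity, and bound the running time by $O(n^2)$ pairs times $O(n^3)$ per pair. The closing discussion of the necessity direction is a nice explicit touch, but it matches what the paper leaves implicit rather than diverging from it.
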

\begin{proof}
For $j_1, j_2 \in [n]$, let $C''_{j_1, j_2}$ be the projection of $C$ on the indices $\{j_1, j_2\}$. We first verify if $C''_{ j_1, j_2}$ is the code \{00, 11\}. For this purpose, it is sufficient to check if $C''_{j_1, j_2}$ is generated by $\{11\}$. 
Now, to see if $C \cong \{ 00, 11 \} \otimes C'$ for some binary linear code $C' \subseteq \F_2^{n-2}$. Define $C'_{\bar{j_1}, \bar{j_2} }$ to be the projection of $C$ on the indices $[n] \setminus \{j_1, j_2 \}$. For a vector $c \in C'_{\bar{j_1}, \bar{j_2}}$, let $c^{00}, c^{11} \in \F_2^n$ be the extensions of $c$ using $\{00, 11\}$ along the $j_1, j_2$ coordinates. 
We note that $C \cong \{ 00, 11 \} \otimes C'$ for some binary linear code $C'$ if and only if there exist indices  $j_1, j_2 \in [n]$, such that  
\begin{equation} \label{check2} 
C = \left\{ c^{00}, c^{11} \mid c \in C_{\bar{j_1}, \bar{j_2}}' \right\}. 
\end{equation}
From the definition of $C_{\bar{j_1}, \bar{j_2}}' $ and $C''_{j_1, j_2} = \{00, 11\}$, it follows that $C\subseteq \{ c^{00}, c^{11} \mid  c \in C_{\bar{j_1}, \bar{j_2}}' \}$. So, the algorithm just needs to verify if the other side of the containment holds for some indices $j_1, j_2 \in [n]$.

Let $G'$ be the set of vectors of $G$ projected on the coordinates $[n] \setminus \{j_1, j_2\}$. Algorithm~\ref{algo2-2Z} verifies if $g^{00}$ and $g^{11}$ are codewords in $C$, for every vector $g \in G'$. We now show that this is sufficient.
Since $C$ is a code, if $g^{00}, g^{11} \in C$ for every $g\in G'$, then all linear combinations of these vectors are also in $C$.  Therefore, $\{ c^{00}, c^{11} \mid~\forall~ c \in C_{\bar{j_1}, \bar{j_2}}' \} \subseteq C$. 

For each choice of $\{ j_1, j_2 \}$, it takes $O(n)$ time to verify if $C_{j_1, j_2}'' = \{ 00, 11\}$. Time to verify if an input vector is a codeword using the parity check matrix is $O(n^2)$.  We perform this check for $2n$ vectors of the form $\{ g^{00}, g^{11} \mid g  \in G' \}$. 

It takes $O(n^3)$ time to verify if  $C \cong \{00, 11\} \otimes C_{\bar{j_1}, \bar{j_2}}'$ for every pair of indices $j_1, j_2 \in [n]$. There are at most $n \choose 2$ possible choices of indices, $j_1, j_2$, therefore, it takes $O(n^5)$ time in total to decide if $C \cong \{00, 11\} \otimes C'$.

\end{proof}

\section{Orthogonal Lattices from Ternary Codes}\label{sec:ternary}
In this section we focus on lattices obtained from ternary linear codes using Construction-A. In Section \ref{sec:decomp}, we show that any orthogonal lattice obtained from a ternary linear code by Construction-A is equivalent to a product lattice whose components are one-dimensional or four-dimensional. In Section \ref{sec:algo}, we show that given a lattice obtained from a ternary linear code by Construction-A, there exists an efficient algorithm to verify if the lattice is orthogonal. 

\subsection{Decomposition Characterization}\label{sec:decomp}
We prove Theorem \ref{thm:ternary-decomposition} in this subsection. 

\begin{proof}[Proof of Theorem \ref{thm:ternary-decomposition}]
We show that $(1) \equiv (2)$ and $(2) \equiv (3)$ to complete the equivalence of the three statements.\\
\noindent $(1) \equiv (2)$: We show that $L_C = C+ 3\Z^n$  is orthogonal if and only if it decomposes into direct product of lower dimensional orthogonal lattices, $L_C  \cong \otimes_i L_i$.

If $L_C  \cong \otimes_i L_i$ such that each $L_i$ is orthogonal, then $L_C$ is also orthogonal. This is because $L_C$ has a block diagonal basis where each block is itself an orthogonal matrix (by definition, a $1\times 1$-dimensional matrix is orthogonal).

We prove the other direction of the equivalence by induction on the dimension, $n$, of the lattice $L_C$. 
For the base case consider $n=1$. Since $L$ is integral, contains $3\Z$ and is of the form $C+3\Z$ for some ternary code $C$, it follows that $L$ has to be either $\Z$ or $3\Z$. 

Let us assume the induction hypothesis for all $n-1$ or lower dimensional orthogonal lattices obtained from ternary linear codes using Construction-A.

Let $L_C$ be an $n$-dimensional orthogonal lattice and $B$ be an orthogonal basis of $L_C$ with the rows being basis vectors. Since $L_C$ is an integral lattice, $B$ has only integral entries. The next two claims summarize certain properties of the entries of the basis matrix $B$.

\begin{claim}\label{claim:property1-B}
For every row $b$ of $B$ and for every $j\in [n]$, we have that $3| b_j |\in \{0,\|b\|^2,3\|b\|^2\}$.
\end{claim}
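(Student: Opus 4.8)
The plan is to follow the argument of Claim~\ref{claim:property1-B-2Z} almost verbatim, with $q=3$ replacing $q=2$, and then add one short observation to rule out the value $2\|b\|^2$ that does not arise in the binary setting.

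First I would use that $B$ is an orthogonal basis, so $BB^T = D$ where $D$ is the diagonal matrix with entries $d_i = \|b^{(i)}\|^2$. Since $3\Z^n \subseteq L_C$, each vector $3e_j$ lies in $L_C$, hence there is an integral matrix $X \in \Z^{n\times n}$ with $XB = 3I_n$; equivalently $3B^{-1} \in \Z^{n\times n}$. Orthogonality gives $B^{-1} = B^T D^{-1}$, whose $j$-th column equals $b/\|b\|^2$ for the corresponding row $b$ of $B$. Therefore $3b_j/\|b\|^2 \in \Z$ for every $j$, i.e. $\|b\|^2$ divides $3b_j$. Since $b$ is a nonzero integral vector, $\|b\|^2 \ge 1$ and $b_j^2 \le \|b\|^2$, so $|b_j| \le \|b\|^2$, which forces $3|b_j| \in \{0,\ \|b\|^2,\ 2\|b\|^2,\ 3\|b\|^2\}$.

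It then remains to exclude $3|b_j| = 2\|b\|^2$. In that case $b_j \neq 0$, and from $b_j^2 \le \|b\|^2$ we get $2b_j^2 \le 2\|b\|^2 = 3|b_j|$, hence $2|b_j| \le 3$, so $|b_j| = 1$. But then $2\|b\|^2 = 3$, which is impossible since $\|b\|^2$ is a positive integer. This yields $3|b_j| \in \{0, \|b\|^2, 3\|b\|^2\}$, as claimed.

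I do not expect a real obstacle here: the linear-algebraic core is identical to the binary case, and the only genuinely new step is the elementary exclusion of the $2\|b\|^2$ case, which is immediate from the integrality of $\|b\|^2$. The main thing to be careful about is simply recording the divisibility $\|b\|^2 \mid 3b_j$ correctly and noting that all of $0,\|b\|^2,2\|b\|^2,3\|b\|^2$ are a priori possible before the extra argument trims the list.
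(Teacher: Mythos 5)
Your proposal is correct and follows essentially the same route as the paper: the identity $B^{-1}=B^TD^{-1}$ combined with $3B^{-1}\in\Z^{n\times n}$ gives $\|b\|^2 \mid 3b_j$, and the bound $|b_j|\le\|b\|^2$ trims the possibilities to $\{0,\|b\|^2,2\|b\|^2,3\|b\|^2\}$. Your exclusion of $3|b_j|=2\|b\|^2$ (forcing $|b_j|=1$ and then $\|b\|^2=3/2$) is a minor, equally valid variant of the paper's argument, which instead rearranges to $|b_j|(3-2|b_j|)=2\sum_{i\neq j}b_i^2$ and derives the contradiction $\sum_{i\neq j}b_i^2=1/2$.
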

\begin{proof}
Since $B$ is an orthogonal basis, $BB^T = D$, where $D$ is the diagonal matrix with $d_i = \| b^{(i)} \|^2$, where $b^{(i)}$ denotes the $i^{th}$ basis vector.  

$$
D=\left[
\begin{array}{ccccc}
\|b^{(1)}\|^2 & 0 & 0 & \cdots & 0\\ 
0 & \|b^{(2)}\|^2 & 0 & \cdots & 0 \\
\vdots& \vdots & \ddots &  & \vdots\\
0 & 0 & 0 & \cdots & \|b^{(n)}\|^2 \\
\end{array}\right]
$$

We know that $3\mathbb{Z}^n \subseteq L_C$ so, $3e_j \in L_C$ for every $j \in [n]$. Therefore, there is an integral matrix $X \in \mathbb{Z}^{n \times n}$ such that   $XB  =  3I_n$, i.e.  $3 B^{-1} \in \mathbb{Z}^{n \times n}$. Since $B$ is an orthogonal basis,
\[ B^{-1} = B^T D^{-1} \in \frac13\mathbb{Z}^{n \times n}. \]

Each column of $B^T D^{-1}$ is given by $b / \|b\|^2$, where $b$ is a basis vector. Therefore, for any $j \in [n], 3b_j$ is a multiple of $\|b\|^2$. Thus we have 
\begin{equation*} \label{eq:mod-condition}
3b_j \equiv 0 \text{ mod } \|b\|^2 \text{ for all } j\in [n], \text{ and rows $b$ of $B$}. 
\end{equation*} 

Since $b_j$ is integral and $ \lvert b_j \rvert \leq \|b\|^2$ for every $j\in [n]$,  it follows that $3|b_j|\in\{0, \|b\|^2, 2\|b\|^2, 3\|b\|^2 \}$. Suppose there exists $j\in[n]$ such that $3|b_j|=2\|b\|^2$. Since $b$ is a basis vector, it follows that $b$ is not all zeroes. Hence $b_j\neq 0$. We can re-write the condition $3 \lvert b_j \rvert  = 2\|b\|^2$ as $3\lvert b_j \rvert  = 2\sum_{i=1}^n b_i^2$. Rearranging the terms, we have
\[
\lvert b_j \rvert~ (3 - 2 \lvert b_j \rvert) = 2\sum_{i \neq j} b_i^2.
\]
Since the RHS is a sum of squares, it is always non-negative. The LHS is non-zero since $b_j \in \Z\setminus\{0\}$. So the LHS should be strictly positive. Therefore, $|b_j|\in (0,3/2)\cap \Z$ and hence $|b_j|=1$. However, this implies that $\sum_{i \neq j} b_i^2 = 1/2$, contradicting the fact that $b$ is integral. Hence, $3 \lvert |b_j| \rvert  = 2\|b\|^2$ is impossible. 
\end{proof}

\begin{claim}\label{claim:properties-of-B}
Let $b$ be a row of $B$.
\begin{enumerate}
\item[(1)] If there exists $j\in [n]$ such that $3|b_j |=3\|b\|^2$, then $b_j=\pm 1$ and $b_{j'}=0$ for every $j'\in [n]\setminus \{j\}$.
\item[(2)] If there exists $j\in [n]$ such that $3|b_j|=\|b\|^2$ and $b_j=\pm 3$, then $b_{j'}=0$ for every $j'\in [n]\setminus \{j\}$.
\item[(3)] If there exists $j\in [n]$ such that $3|b_j |=\|b\|^2$ and $b_j=\pm 1$, then there exist ${j_1}, {j_2}\in[n]\setminus\{j\}$, such that $\lvert b_{j_1} \rvert = \lvert b_{j_2} \rvert = 1$ and $b_{j'}=0$ for every $j'\in [n]\setminus\{j,j_1,j_2\}$.
\item[(4)] If there exists $j\in [n]$ such that $3|b_j |=\|b\|^2$, then $b_j'\in \{0,\pm 1, \pm 3\}$ for every $j'\in [n]$. 
\end{enumerate}
\end{claim}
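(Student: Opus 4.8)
The plan is to read off all four parts from the single identity $\|b\|^2=\sum_{j'=1}^{n}b_{j'}^2$ together with integrality of $b$ and with Claim~\ref{claim:property1-B}; the argument is the ternary analogue of the one used for Claim~\ref{claim:properties-of-B-2Z}. Since $b$ is a row of a basis matrix it is nonzero, so $\|b\|^2$ is a positive integer throughout, and whenever the hypothesis $3|b_j|=\|b\|^2$ or $3|b_j|=3\|b\|^2$ is in force we automatically have $b_j\neq 0$ (otherwise $\|b\|^2=0$).

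For part~(1) the hypothesis gives $|b_j|=\|b\|^2\ge b_j^2=|b_j|^2$, hence $|b_j|\le 1$, so $|b_j|=1$; then $\sum_{j'}b_{j'}^2=1$ forces $b_{j'}=0$ for all $j'\neq j$. For part~(2), $3|b_j|=\|b\|^2$ together with $b_j=\pm 3$ gives $\|b\|^2=9=b_j^2$, so again all other coordinates vanish. For part~(3), $3|b_j|=\|b\|^2$ with $b_j=\pm 1$ gives $\|b\|^2=3$, hence $\sum_{j'\neq j}b_{j'}^2=2$; by integrality the only way to write $2$ as a sum of squares of integers is $1+1$, so exactly two coordinates $j_1,j_2\neq j$ satisfy $|b_{j_1}|=|b_{j_2}|=1$ and the rest are $0$.

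For part~(4) I would invoke Claim~\ref{claim:property1-B} coordinatewise: every $b_{j'}$ satisfies $|b_{j'}|\in\{0,\|b\|^2/3,\|b\|^2\}=\{0,|b_j|,3|b_j|\}$, using $\|b\|^2=3|b_j|$. Let $a\ge 1$ be the number of coordinates of absolute value $|b_j|$ and $c\ge 0$ the number of absolute value $3|b_j|$; then $\|b\|^2=(a+9c)\,b_j^2$, and comparing with $\|b\|^2=3|b_j|$ gives $(a+9c)|b_j|=3$. Since $|b_j|\ge 1$ this forces $c=0$ and $(a,|b_j|)\in\{(3,1),(1,3)\}$, so every coordinate of $b$ lies in $\{0,\pm|b_j|\}\subseteq\{0,\pm1,\pm3\}$. (As a byproduct this re-derives parts~(2) and~(3): $|b_j|=3$ forces all other coordinates to $0$, while $|b_j|=1$ forces exactly two further coordinates equal to $\pm1$.)

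The only step needing genuine care is part~(4): it is essential to use Claim~\ref{claim:property1-B} and not merely integrality. For instance $b=(2,1,1,0,\dots,0)$ satisfies $3|b_1|=\|b\|^2=6$ yet has a coordinate outside $\{0,\pm1,\pm3\}$; what rules it out as a row of an orthogonal basis is precisely Claim~\ref{claim:property1-B}, since $3|b_2|=3\notin\{0,\|b\|^2,3\|b\|^2\}=\{0,6,18\}$. Beyond this observation everything reduces to a short finite case check, so I do not anticipate a real obstacle.
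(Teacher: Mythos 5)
Your proof is correct and follows essentially the same route as the paper: parts (1)--(3) are read off from $\|b\|^2=\sum_i b_i^2$ and integrality exactly as in the paper, and part (4) hinges, as in the paper, on applying Claim~\ref{claim:property1-B} to the other coordinates of $b$ (the paper excludes $b_j=\pm 2$ by noting the two forced $\pm 1$ coordinates would violate that claim, while you package the same idea as a counting identity $(a+9c)|b_j|=3$ that also re-derives (2) and (3)). Your observation that integrality alone does not suffice for part (4), illustrated by $(2,1,1,0,\dots,0)$, correctly identifies the one point where Claim~\ref{claim:property1-B} is indispensable.
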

\begin{proof}

\begin{enumerate}
\item[(1)] Since,  $\| b\|^2 = \sum_{i=1}^n b_i ^2$, and each $b_i \in \mathbb{Z}$, we conclude that $\lvert b_j \rvert = 1$ and the remaining coordinates in $b$ have to be $0$, i.e $b_{j'} = 0$ for all $j'\in[n]\setminus\{j\}$.
\item[(2)] Follows from $3|b_j|=\|b\|^2$ and $b$ being integral. 
\item[(3)] We can re-write the condition $3 \lvert b_j \rvert  = \|b\|^2$ as $3\lvert b_j \rvert  = \sum_{i=1}^n b_i^2$. Rearranging the terms, we have
\begin{equation}\label{eq:1}
\lvert b_j \rvert ~(3 -  \lvert b_j \rvert) = \sum_{i \neq j} b_i^2. 
\end{equation}

If $b_j =\pm 1$, then $\sum_{i \neq j} b_i^2 = 2$. Further, $b$ is integral. Hence, $b$ has exactly $2$ other non-zero coordinates $b_{j_1}, b_{j_2}$, $j\neq j_1, j_2$, such that $\lvert b_{j_1} \rvert = \lvert b_{j_2} \rvert = 1$. 
\item[(4)] We have equation (\ref{eq:1}). The RHS is a sum of squares and hence the LHS is non-negative. Moreover, $b$ is not all-zeroes vector implies that $b_j \neq 0$. Therefore, $|b_j|\in (0,3]\cap \Z$. If $b_j=\pm 2$, then in order to satisfy $\sum_{i\neq j}b_i^2=2$ using integral $b_i$'s, exactly two coordinates $b_{j_1},b_{j_2}$ should be $\pm 1$, where $j\neq j_1,j_2$. However, in this case, $3|b_{j_1}|= 3|b_{j_2}|=3\not\in \{0,\|b\|^2=6, 3\|b\|^2=18\}$, thus contradicting Claim \ref{claim:property1-B}.
The conclusion follows from parts (2) and (3).
\end{enumerate}
\end{proof}

Using the properties of the orthogonal basis $B$ of $L_C$ given in Claims \ref{claim:property1-B} and \ref{claim:properties-of-B}, we show that $B$ is equivalent (up to permutations of its columns) to a block diagonal matrix, i.e
\[ B \cong \begin{bmatrix} B_1 & 0 & \cdots & 0 \\
0& B_2 & \cdots & 0 \\
\vdots& & \ddots & 0 \\
0 & 0 & \cdots & B_k
\end{bmatrix}\]
where each $B_i$ is either the $1\times 1$ matrix $\begin{bmatrix}1\end{bmatrix}$ or the $1\times 1$ matrix $\begin{bmatrix}3\end{bmatrix}$ or the $4\times 4$ matrix obtained from $M$ by negating a subset of its columns, ${\cal{T}}(M)$. It follows that $L_C \cong \otimes_i L_i$ such that $B_i$ is the basis for the lower dimensional lattice $L_i$.

Let us pick a row $b$ of $B$ with the smallest support. 
Fix an index $j \in [n]$ to be the index of a non-zero entry with minimum absolute value in $b$, i.e. $j  := \arg \min_k \{ \lvert b_k \rvert \}$. Since $b$ is a row of a basis matrix, $b$ cannot be the all-zeroes vector and therefore there exists a $j \in [n]$ such that $|b_j|>0$. Since we are only interested in equivalence (that allows for permutation of coordinates), we may assume without loss of generality that $j=1$ by permuting the coordinates. By Claim \ref{claim:property1-B}, we have that $3 \lvert b_1 \rvert \in \{ \|b\|^2,3\|b\|^2 \}$. We consider each of these cases separately. 

\begin{enumerate}[leftmargin=*]
\item Suppose $3 \lvert b_1 \rvert  = 3 \|b\|^2$. 
By Claim \ref{claim:properties-of-B}(1), $b = (\pm 1, 0,  \ldots, 0 )$. Since $B$ is an orthogonal basis, $\inner{b, b'} = 0 \Rightarrow b'_1 = 0$ for all $b' \neq b \in B$. The orthogonality of $B$ therefore forces all other basis vectors to take a value of $0$ in the $1^{st}$ coordinate. Thus $B$ is of the form

\[B = 
\left(
\begin{array}{c|c}
  \pm1 & 0 \cdots 0 \\ \hline
  0 & \raisebox{-15pt}{{\Large\mbox{{$B'$}}}} \\[-4ex]
  \vdots & \\[-0.5ex]
  0 &
\end{array}
\right).
\]

Therefore, we obtain $L_C \cong \Z \otimes L'$, where $L'$ is an orthogonal $(n-1)$-dimensional lattice generated by the basis matrix restricted to the coordinates other than $1$, say, $B'$.  
From Claim~\ref{lemma:product_codes_lattices}, it follows that $L' = C' + 3 \Z^{n-1}$ for some ternary linear code $C' \subseteq \mathbb{F}_3^{n-1}$. Thus $L'$ satisfies the induction hypothesis and we have the desired decomposition.


\item Suppose $3 \lvert b_1 \rvert  = \|b\|^2$. We can re-write this condition as $3\lvert b_1 \rvert  = \sum_{i=1}^n b_i^2$. Rearranging the terms, we have
\[
\lvert b_1 \rvert ~(3 -  \lvert b_1 \rvert) = \sum_{i \neq 1} b_i^2. 
\]
Since the RHS is a sum of squares, it should be non-negative.

\begin{description}[leftmargin=*, listparindent=\parindent]
\item[(i)] If RHS is $0$, then $b_1 = \pm  3$ and therefore, it follows from Claim~\ref{claim:properties-of-B}(2) that $b = (\pm 3, 0,  \ldots, 0)$. The orthogonality of $B$ forces all other basis vectors to take a value of $0$ in the $1^{st}$ coordinate. 
\[B = 
\left(
\begin{array}{c|c}
  \pm3 & 0 \cdots 0 \\ \hline
  0 & \raisebox{-15pt}{{\Large\mbox{{$B'$}}}} \\[-4ex]
  \vdots & \\[-0.5ex]
  0 &
\end{array}
\right)
\]
Therefore, we obtain $L_C\cong 3\Z \otimes L'$, where $L'$ is an orthogonal $(n-1)$-dimensional lattice generated by the basis matrix restricted to the coordinates other than $1$, say $B'$. 
From Claim~\ref{lemma:product_codes_lattices}, it follows that $L' = C' + 3 \Z^{n-1}$ for some ternary linear code $C' \subseteq \mathbb{F}_3^{n-1}$. Thus $L'$ satisfies the induction hypothesis and we have the desired decomposition. 
				
\item[(ii)] If RHS is strictly positive, then $|b_1|\in (0,3)\cap \Z=\{1,2\}$. By Claim \ref{claim:properties-of-B}(4), $b_1\neq \pm 2$. Therefore, $b_1=\pm 1$. By Claim \ref{claim:properties-of-B}(3), we have that $b$ has exactly three non-zero coordinates and they are $\pm 1$. By permuting the coordinates of $B$,  we may assume that  $b \equiv (\pm 1, \pm 1, \pm 1, 0, \cdots, 0)$. 		 

Since we picked the row $b$ to be the one with the smallest support, it follows that every row has at least $3$ non-zero coordinates. 
By Claims \ref{claim:property1-B} and \ref{claim:properties-of-B}, this is possible only if 
for every other row $b'$, there exists $j'\in[n]$ such that $3|b'_{j'} |=\|b'\|^2$. By Claim \ref{claim:properties-of-B}(4), every other row $b'$ has all its coordinates in $\{0,\pm 1, \pm 3\}$. 
By Claim \ref{claim:properties-of-B}(2), every other row $b'$ has none of its coordinates in $\{\pm 3\}$. Therefore, every other row $b'$ has all its coordinates in $\{0,\pm 1\}$. By Claim \ref{claim:properties-of-B}(3), every row of the basis matrix has the same form as $b$: they have exactly three non-zero entries each of which is $\pm 1$. 

Since the rows of the basis matrix are orthogonal, it follows that the basis matrix $B$ is a weighing matrix of order $n$ with weight $3$. By Theorem \ref{thm:weight-3-matrices}, the matrix $B$ is obtained from $\otimes_{n/4} M$ by either negating some rows or columns and by interchanging rows or columns. We recall that interchanging or negating the rows of the basis matrix of a lattice preserves the basis property while interchanging columns is equivalent to permuting the coordinates. Hence $L_C=L(B)\cong \otimes_{i=1}^{n/4}L({\cal{T}}_i(M))$, where each ${\cal{T}}_i(M)$ is a $4\times 4$ matrix obtained by negating a subset of columns of $M$.

\end{description}
\end{enumerate}

\noindent $(2) \equiv (3)$: We now show that $L_C$ decomposes into direct product of lower dimensional lattices, $L_C \cong \otimes_i L_i$ if and only if the code $C$ also decomposes, $C \cong \otimes_i C_i$.

Let $L_C\cong \otimes_i L_i$. Without loss of generality, we can consider $L_C = \otimes_i L_i$. We have $C = L_C \mod 3= \otimes_i L_i \mmod 3$. 
We observe that if $L_i$ has dimension $n_i$, then $L_i\supseteq 3\Z^{n_i}$. Therefore, $C_i=L_i\mmod 3$ is a ternary code. Let $C_i := L_i \mmod 3$ for every $i$. Then $C=\otimes_i C_i$. (If $c\in C$, then $c\in L$ and hence the projection of $c$ to the subset of coordinates corresponding to $L_i$ is in $C_i$. Let $c_i\in C_i$ for every $i$. The concatenated vector $\otimes_i c_i$ is in $\otimes_i L_i \mod 3$ and hence is in $C$.) 
 
To show the other direction of the equivalence, let $C \cong \otimes_i C_i$, where each $C_i \subseteq \mathbb{F}_3^{n_i}$ and $n = \sum_i n_i$. Therefore $L_C = C+ 3\mathbb{Z}^n  \cong \otimes_i C_i + 3\Z^n \cong \otimes_i  (C_i + 3 \mathbb{Z}^{n_i})$, since $\mathbb{Z}^n \cong \otimes_i \mathbb{Z}^{n_i}$.

\end{proof}

\subsection{Algorithm}\label{sec:algo}

Theorem~\ref{thm:ternary-decomposition} shows that a lattice of the form $C+ 3\mathbb{Z}^n$ is orthogonal if and only if the underlying code decomposes into direct product of ternary linear codes isomorphic to $\{ 0, 1, 2\}$ or $\{0\}$ or the four dimensional code generated by ${\cal{T}}(M) \mmod 3$, where ${\cal{T}}(M)$ is obtained from matrix $M$ by negating a subset of its columns. We now give a polynomial time algorithm which finds the decomposition of the code $C$ into the component codes, $C_i$, if there exists one. Therefore, if the lattice $L_C$ is orthogonal, the algorithm decides in polynomial time if it is orthogonal and also gives the orthogonal basis for the lattice. 

The algorithm recursively attempts to find the component codes. If it is unable to decompose the code at any stage, then it declares that $L_C$ is not orthogonal. At every step we check if $C \cong \{0,1,2\} \times C'$ or $\{0\}\times C'$ or  $C_{{\cal{T}}(M)} \times C'$ where $C_{{\cal{T}}(M)}$ is the code generated by ${\cal{T}}(M) \mmod 3$ and then recurse on $C'$.

\begin{proof}[Proof of Theorem \ref{thm:algo-ternary}]
Given a basis for $L_C$ as input, we first compute the generator for $C$. 
 From Theorem~\ref{thm:ternary-decomposition}, we know that if $L_C$ is orthogonal, then $C \cong \otimes_i C_i$ where each $C_i$ is either the length-1 code $\{0, 1, 2\}$ or the length-1 code $\{0\}$ or a $4$-dimensional code generated by the rows of ${\cal{T}}(M) \mmod 3$ where ${\cal{T}}(M)$ obtained from matrix $M$ by negating a subset of its columns. 

The algorithm therefore in each step decides if $C \cong \{0,1,2\} \otimes C'$ or $C\cong \{0\}\otimes C'$ or $C \cong C_{{\cal{T}}(M)} \otimes C'$, where $C_{{\cal{T}}(M)}$ denotes the code generated by ${\cal{T}}(M) \mmod 3$.
Theorem~\ref{theorem:code-decomposition-01} shows that using Algorithm~\ref{algo1} we can check in $O(n^4)$ time, if $C \cong \{0,1,2\} \otimes C'$. The same algorithm can be modified to check in $O(n^4)$ time, if $C\cong \{0\}\otimes C'$. Theorem~\ref{theorem:code-decomposition-Cp} shows that Algorithm~\ref{algo2} can verify if $C \cong C_{{\cal{T}}(M)} \otimes C'$ in $O(n^7)$ time. If any one of the algorithms finds a decomposition, then we recurse on the lower dimensional code $C'$ to find a further decomposition. We recurse at most $n$ times. If all the algorithms 
fail to find a decomposition, then $L_C$ is not orthogonal. Therefore, it takes $O(n^8)$ time to decide if $L_C$ is orthogonal. 
\end{proof}

We now describe the individual algorithms to verify if $C \cong \{0,1,2\} \otimes C'$ or $C\cong \{0\}\otimes C'$ or $C \cong C_{{\cal{T}}(M)} \otimes C'$. 

\begin{algorithm}
  \caption{$\mathbf{: decompose-length-1(G)}$:}
  \label{algo1}
  \textbf{Input}: $G=\{g_1,\ldots, g_n\} \in \F_3^{n}$ (A generator for the code $C$)\\ 

  \begin{algorithmic}[1]
  \FOR {$j \in \{ 1, \cdots, n \} $ }
  	\STATE Let $G' \leftarrow$ projection of vectors in $G$ on coordinates $[n] \setminus \{j\}$
  	\STATE For $g\in G'$, define $g^0,g^1,g^2\in \F_3^n$ as the $n$-dimensional vectors obtained by extending $g$ using $0$, $1$ and $2$ along the $j$'th coordinate respectively. 
	\IF{ $g^0,\ g^1,\ g^2 \in C$ for all $g \in G'$}
		\STATE \textbf{return } $j$
	\ENDIF
  \ENDFOR
  \STATE \textbf{return } FAIL
  \end{algorithmic}
\end{algorithm}
 
\begin{theorem}
\label{theorem:code-decomposition-01}
Let $C$ be a ternary linear code and $G=\{g_1,\ldots, g_n\}\in \F_3^{n\times n}$ be its generator. Then Algorithm \ref{algo1} decides if $C\cong \{0,1,2\}\otimes C'$ for some linear code $C'\subseteq \F_3^{n-1}$ and if so outputs the coordinate corresponding to the direct product decomposition. Moreover the algorithm runs in time $O(n^4)$.

\end{theorem}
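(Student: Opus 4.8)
The plan is to adapt the argument of Theorem~\ref{theorem:code-decomposition-01-2Z} from the binary case, replacing $\F_2$ by $\F_3$ and allowing three extensions of a coordinate instead of two.

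First I would set up the structural criterion. For $j \in [n]$ write $C_{\overline{j}}' \subseteq \F_3^{n-1}$ for the projection of $C$ onto $[n]\setminus\{j\}$, and for $c \in C_{\overline{j}}'$ let $c^0, c^1, c^2 \in \F_3^n$ be the three vectors obtained by inserting $0,1,2$ respectively into coordinate $j$. Then $C \cong \{0,1,2\}\otimes C'$ for some ternary linear code $C'$ if and only if there is an index $j \in [n]$ with
\[ C = \{\, c^0, c^1, c^2 \mid c \in C_{\overline{j}}' \,\}, \]
and in that case $C' = C_{\overline{j}}'$ works. This is exactly the condition Algorithm~\ref{algo1} scans for.

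Next I would observe that the inclusion $C \subseteq \{\, c^0, c^1, c^2 \mid c \in C_{\overline{j}}' \,\}$ holds for \emph{every} $j$ by the very definition of the projection $C_{\overline{j}}'$ (each codeword of $C$ restricts to a codeword of $C_{\overline{j}}'$ and agrees with one of its three $j$-extensions), so for a fixed $j$ only the reverse inclusion needs checking. For this I would invoke linearity of $C$: letting $G'$ be the set of projections of the generators in $G$ onto $[n]\setminus\{j\}$, the set $G'$ spans $C_{\overline{j}}'$; and if $g^0,\,g^1 \in C$ for every $g \in G'$ (which is implied by the algorithm's test $g^0,g^1,g^2 \in C$), then $e_j = g^1 - g^0 \in C$, and hence every $c^a$ with $c \in C_{\overline{j}}'$ and $a \in \{0,1,2\}$ is the corresponding $\F_3$-linear combination of the vectors $\{g^0 : g \in G'\}$ plus a multiple of $e_j$, so it lies in $C$. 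Thus $\{\, c^0, c^1, c^2 \mid c \in C_{\overline{j}}' \,\} \subseteq C$, and together with the automatic inclusion we get equality. This establishes that the algorithm's output (return $j$ exactly when both inclusions hold for some $j$, FAIL otherwise) is correct.

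Finally, for the running time: from $G$ one computes a parity-check matrix for $C$ in $O(n^2)$ time, after which membership of a vector in $C$ is tested in $O(n^2)$ time. For each of the $n$ choices of $j$, the algorithm tests the $3n$ vectors $g^0,g^1,g^2$ ($g \in G'$) for membership, costing $O(n\cdot n^2)$ per coordinate and $O(n^4)$ in total. I do not anticipate a genuine obstacle here; the only point requiring care is the reduction of the reverse-inclusion check to the finitely many generator vectors, which is precisely the linearity argument above and mirrors the binary case. The stated modification for testing $C\cong\{0\}\otimes C'$ is identical except that one extends generators only by $0$ along coordinate $j$.
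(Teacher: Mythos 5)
Your proposal is correct and follows essentially the same route as the paper's proof: reduce to the coordinate-wise criterion $C = \{c^0,c^1,c^2 \mid c \in C_{\overline{j}}'\}$, note the forward inclusion is automatic, verify the reverse inclusion only on the projected generators using linearity, and count $n$ choices of $j$ times $3n$ membership tests at $O(n^2)$ each. Your explicit observation that $e_j = g^1 - g^0 \in C$ (so that $c^a = c^0 + a\,e_j$) is a slightly more detailed justification of the linearity step than the paper gives, but it is the same argument.
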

\begin{proof}
For $j\in [n]$, let $C_{\overline{j}}'\subseteq \F_3^{n-1}$ be the projection of $C$ on the indices $[n] \setminus \{j\}$ and for a vector $c\in C_{\overline{j}}'$, let $c^0,c^1,c^2\in \F_3^n$ be extensions of $c$ using $0,1,2$ respectively along the $j$'th coordinate. We note that $C \cong \{0,1,2\} \otimes C'$ for some ternary linear code $C'$ if and only if there exists an index  $j \in [n]$, such that  
\begin{equation*} \label{check1} 
C = \left\{ c^0, c^1, c^2 \mid  c \in C_{\overline{j}}' \right\}. 
\end{equation*}
From the definition of $C_{\overline{j}}'$, it follows that $C\subseteq \{ c^0, c^1, c^2 \mid c \in C_{\overline{j}'} \}$ up to a permutation of coordinates. So, the algorithm just needs to verify if the other side of the containment holds for some $j$.

Let $G'$ be the set of vectors of $G$ projected on the coordinates $[n] \setminus \{j\}$. Algorithm~\ref{algo1} verifies if $g^0$, $g^1$ and $g^2$ are codewords in $C$, for every vector $g \in G'$. We now show that this is sufficient.
Since $C$ is a code, if $g^0$, $g^1$, $g^2$ $\in C$ for every $g\in G'$, then all linear combinations of these vectors are also in $C$.  Therefore, $\{ c^0, c^1 , c^2 \mid c \in C_{\overline{j}}' \} \subseteq C$.

It takes $O(n^2)$ 
time to compute a parity check matrix from the generator $G$ and $O(n^2)$ time to verify if an input vector is a codeword using the parity check matrix.  
For every possible choice of the index $j$, Algorithm~\ref{algo1} checks if each of the $3n$ vectors of the form $g^0,g^1,g^2$ are in $C$. Therefore, Algorithm~\ref{algo1} takes $O(n^4)$ time to decide if $C \cong \{0, 1, 2\} \otimes C'$. 
\end{proof}

\begin{algorithm}
  \caption{ $\mathbf{: decompose-length-4(G)}$:}
  \label{algo2}
  \textbf{Input}: $G=\{g_1,\ldots, g_n\} \in \F_3^{n}$ (A generator for the code $C$)\\
	
  \begin{algorithmic}[1]
  \FOR{ $j_1, j_2, j_3, j_4 \in \{1, 2, \cdots, n \} $}
	\STATE Let $G' \leftarrow$ projection of vectors in $G$ on coordinates $[n] \setminus \{ j_1, j_2, j_3, j_4 \}$
	\STATE Let $G'' \leftarrow$ projection of vectors in $G$ on coordinates $\{ j_1, j_2, j_3, j_4 \}$
  	\FOR{ $S \subseteq [4]$}
		\STATE Let ${\cal{T}}(M) \leftarrow M$ with columns in $S$ negated
		\IF{ $C_{{\cal{T}}(M)} \equiv$ Code generated by $G''$}
			\STATE For $g \in G'$ define $g^{p_1}, g^{p_2}, g^{p_3}, g^{p_4} \in \F_3^n$ be $n$-dimensional vectors obtained by extending $g$ using the rows of ${\cal{T}}(M)$ along the $j_1, j_2, j_3, j_4$ coordinates. 
			\IF{ $g^{p_1}, g^{p_2}, g^{p_3}, g^{p_4} \in C$ for all $g \in G'$}
				\STATE \textbf{ return } $j_1, j_2, j_3, j_4$ and ${\cal{T}}(M)$
			\ENDIF
		\ENDIF
	\ENDFOR
  \ENDFOR
  \STATE \textbf{ return } FAIL

  \end{algorithmic}
\end{algorithm}

\begin{theorem}
\label{theorem:code-decomposition-Cp}
Let $C$ be a ternary linear code and $G=\{g_1,\ldots, g_n\}\in \F_3^{n\times n}$ be its generator. For a matrix ${\cal{T}}(M)$ obtained by negating a subset of columns of $M$, let $C_{{\cal{T}}(M)}$ be the length-$4$ code whose generators are the rows of ${\cal{T}}(M)$. 
Then Algorithm \ref{algo2} decides if $C\cong C_{{\cal{T}}(M)} \otimes C'$ for some linear codes $C'\subseteq \F_3^{n-4}$ and $C_{{\cal{T}}(M)} \subseteq \F_3^4$ and if so outputs the coordinates corresponding to the direct product decomposition as well as the matrix ${\cal{T}}(M)$. 
Moreover the algorithm runs in time $O(n^7)$.
\end{theorem}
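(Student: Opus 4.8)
The plan is to mirror the proof of Theorem~\ref{theorem:code-decomposition-Cp-2Z}: first show that the condition Algorithm~\ref{algo2} checks is equivalent to the existence of the decomposition, and then bound the running time. Fix an ordered $4$-tuple $J=(j_1,j_2,j_3,j_4)$ of coordinates and a set $S\subseteq[4]$ of columns to negate, let $p_1,\ldots,p_4$ be the rows of the resulting ${\cal{T}}(M)$, and put $\overline J=[n]\setminus J$. Let $C'_{\overline J}$ and $C''_J$ denote the projections of $C$ onto $\overline J$ and $J$, and let $G',G''$ be the corresponding projections of the generating set $G$; since projection commutes with taking spans, $G'$ generates $C'_{\overline J}$ and $G''$ generates $C''_J$. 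After reordering coordinates so that $J$ comes last, the (immediate) observation is that $C\cong C_{{\cal{T}}(M)}\otimes C'$ for some linear $C'$ if and only if for some $(J,S)$ we have $C''_J=C_{{\cal{T}}(M)}$ together with $C=\{(c,v):c\in C'_{\overline J},\,v\in C_{{\cal{T}}(M)}\}$; and once $C''_J=C_{{\cal{T}}(M)}$ holds, the containment $C\subseteq C'_{\overline J}\times C_{{\cal{T}}(M)}$ is free, so the algorithm only needs to certify the reverse containment.

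Next I would show that the tests $g^{p_1},\ldots,g^{p_4}\in C$ for all $g\in G'$ certify exactly that reverse containment. Let $D\subseteq C$ be the subcode generated by $\{(g,p_i):g\in G',\,i\in[4]\}$; trivially $D\subseteq C'_{\overline J}\times C_{{\cal{T}}(M)}$. Conversely, the differences $(g,p_i)-(g,p_{i'})=(0,\,p_i-p_{i'})$ lie in $D$, so $(0,v)\in D$ for every $v\in V:=\mathrm{span}\{p_i-p_{i'}:i,i'\in[4]\}$; if $V=C_{{\cal{T}}(M)}$ then $(0,p_i)\in D$, hence $(g,p_i)-(0,p_i)=(g,0)\in D$ for each $g\in G'$, hence $(c,0)\in D$ for each $c\in C'_{\overline J}$, and combining the two families gives $C'_{\overline J}\times C_{{\cal{T}}(M)}\subseteq D$. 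Together with $C''_J=C_{{\cal{T}}(M)}$ this yields $C=C'_{\overline J}\times C_{{\cal{T}}(M)}$, i.e.\ soundness. Completeness is immediate: if $C\cong C_{{\cal{T}}(M)}\otimes C'$, the realizing coordinate permutation supplies an ordered $J$ and the same $S$ with $C''_J=C_{{\cal{T}}(M)}$, $C'_{\overline J}=C'$, and $(g,p_i)\in C$ for all $g\in G'$ and $i\in[4]$, so $(J,S)$ passes the test.

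The step I expect to be the crux is the structural fact $V=C_{{\cal{T}}(M)}$ for every one of the $16$ matrices ${\cal{T}}(M)$ — that is, the pairwise differences of the four rows of ${\cal{T}}(M)\bmod 3$ already span its row space. For $M$ itself this is a small computation over $\F_3$: one checks that $M$ has rank $2$ (its row space is the ternary tetracode), while $p_2-p_1$ and $p_4-p_1$ are linearly independent and hence span that $2$-dimensional space. Since negating a subset of columns is a coordinatewise monomial map, it is a linear isometry, so it preserves both $\dim C_{{\cal{T}}(M)}$ and $\dim V$; thus $V=C_{{\cal{T}}(M)}$ in all cases. This is exactly the point where the argument departs from the binary $\{00,11\}$ case and the length-$1$ ternary case: there the extensions ranged over \emph{all} codewords of the small component code, whereas here only the four rows are used, and it is the above rank computation that makes four extensions sufficient.

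Finally, for the running time, precompute a parity-check matrix for $C$ from $G$ once, in $O(n^3)$ time. There are $O(n^4)$ choices of $(j_1,j_2,j_3,j_4)$ and $16=O(1)$ choices of $S$; for each pair, forming $G'$ and $G''$ and testing $C''_J=C_{{\cal{T}}(M)}$ costs $O(n^2)$ (the relevant span computation lives in the constant-size space $\F_3^4$), and then for each of the $O(n)$ vectors $g\in G'$ and each of the four rows $p_i$ we build $g^{p_i}$ and test membership against the parity-check matrix in $O(n^2)$, i.e.\ $O(n^3)$ per pair; the total is $O(n^4)\cdot O(n^3)=O(n^7)$. When some pair passes, the algorithm returns $(j_1,j_2,j_3,j_4)$ and ${\cal{T}}(M)$, and $C'=C'_{\overline J}$ is recovered as the projection of $C$ onto $\overline J$.
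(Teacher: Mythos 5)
Your proof is correct and follows essentially the same route as the paper: iterate over ordered $4$-tuples and the $16$ negation patterns, check $C''_J=C_{{\cal T}(M)}$, use the free containment $C\subseteq C'_{\overline J}\times C_{{\cal T}(M)}$, certify the reverse containment by testing the $4n$ extensions $g^{p_i}$ against a parity-check matrix, and charge $O(n^3)$ per tuple for a total of $O(n^7)$. The one place you go beyond the paper is welcome rather than divergent: the paper simply asserts that linear combinations of the vectors $(g,p_i)$ yield all of $\{c^p\}$, whereas you verify the needed fact that the pairwise differences of the rows of ${\cal T}(M)\bmod 3$ already span its (rank-$2$) row space, which is exactly why four extensions per generator suffice.
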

\begin{proof}
For $1 \leq j_1 < j_2 < j_3 < j_4 \leq n$, let $C''_{j_1, j_2, j_3, j_4}$ be the projection of $C$ on the indices $\{j_1, j_2, j_3, j_4\}$. We first verify if $C''_{ j_1, j_2, j_3, j_4}$ is the code generated by the rows of ${\cal{T}}(M)$ (denoted as $C_{{\cal{T}}(M)}$) for some ${\cal{T}}(M)$ which is obtained by negating a subset of columns of $M$. We would like to check if every $c \in C''_{ j_1, j_2, j_3, j_4}$ is in $C_{{\cal{T}}(M)}$ and vice versa. For this purpose, it is sufficient to check if the generator vectors of $C''_{j_1, j_2, j_3, j_4}$ are codewords in $ C_{{\cal{T}}(M)}$ and each row of ${\cal{T}}(M)$ is a codeword in  $C''_{j_1, j_2, j_3, j_4}$. We know that the generators of $C''_{ j_1, j_2, j_3, j_4}$ are contained in $G''$ where $G''$ is the set of vectors in $G$ projected on the indices $\{j_1, j_2, j_3, j_4\}$. 

Once we fix ${\cal{T}}(M)$ such that $C''_{j_1, j_2, j_3, j_4} = C_{{\cal{T}}(M)}$, to see if $C \cong C_{{\cal{T}}(M)} \otimes C'$ for some ternary linear code $C' \subseteq \F_3^{n-4}$. Define $C'_{\bar{j_1}, \bar{j_2}, \bar{j_3}, \bar{j_4} }$ to be the projection of C on the indices $[n] \setminus \{j_1, j_2, j_3, j_4\}$. For a vector $c \in C'_{\bar{j_1}, \bar{j_2}, \bar{j_3}, \bar{j_4}}$, let $c^{p} \in \F_3^n$ be the extensions of $c$ using a codeword $p \in C_{{\cal{T}}(M)}$ along the $j_1, j_2, j_3, j_4$ coordinates. 
We note that $C \cong C_{{\cal{T}}(M)} \otimes C'$ for some ternary linear code $C'$ if and only if there exist indices  $j_1, j_2, j_3, j_4 \in [n]$, such that  
\begin{equation} \label{check2} 
C = \left\{ c^{p} \mid c \in C_{\bar{j_1}, \bar{j_2}, \bar{j_3}, \bar{j_4}}' , p \in C_{{\cal{T}}(M)} \right\}. 
\end{equation}
From the definition of $C_{\bar{j_1}, \bar{j_2}, \bar{j_3}, \bar{j_4}}' $ and $C''_{j_1, j_2, j_3, j_4} ~(=C_{{\cal{T}}(M)})$, it follows that $C\subseteq \{ c^{p} \mid c \in C_{\bar{j_1}, \bar{j_2}, \bar{j_3}, \bar{j_4}}', p \in C_{{\cal{T}}(M)} \}$. So, the algorithm just needs to verify if the other side of the containment holds for some indices $j_1, j_2, j_3, j_4 \in [n]$.

Let $G'$ be the set of vectors of $G$ projected on the coordinates $[n] \setminus \{j_1, j_2, j_3, j_4\}$. Algorithm~\ref{algo2} verifies if $g^{p_0}$, $g^{p_1}$, $g^{p_3}$ and $g^{p_4}$ are codewords in $C$, for every vector $g \in G'$. We now show that this is sufficient.
Since $C$ is a code, if $g^{p_0}, g^{p_1}, g^{p_3}, g^{p_4} \in C$ for every $g\in G'$ and $p_i \in {\cal{T}}(M)$, then all linear combinations of these vectors are also in $C$.  Therefore, $\{ c^{p} \mid c \in C_{\bar{j_1}, \bar{j_2}, \bar{j_3}, \bar{j_4}}', p \in C_{{\cal{T}}(M)} \} \subseteq C$. 

There are $2^4 4^4$ possible choices of ${\cal{T}}(M)$ including permutations.  For each matrix ${\cal{T}}(M)$, it takes $O(n)$ time to verify if $C_{{\cal{T}}(M)} = C_{j_1, j_2, j_3, j_4}''$.  Time to verify if an input vector is a codeword using the parity check matrix is $O(n^2)$.  We perform this check for $4n$ vectors of the form $\{ g^{p_0}, g^{p_1}, g^{p_3}, g^{p_4} \mid g  \in G' \}$. So,  for a given ${\cal{T}}(M)$ such that $C_{{\cal{T}}(M)} = C_{j_1, j_2, j_3, j_4}''$, It takes $O(n^3)$ time to verify $C \cong C_{{\cal{T}}(M)} \otimes C'$.

For every possible choice of indices, $\{j_1, j_2, j_3, j_4\}$, Algorithm~\ref{algo2} takes $O(n^3)$ time to verify if $C \cong C_{{\cal{T}}(M)} \otimes C_{\bar{j_1}, \bar{j_2}, \bar{j_3}, \bar{j_4}}'$. Since there are at most $n \choose 4$ possible choices of indices, it takes $O(n^7)$ time in total to decide if $C \cong C_{{\cal{T}}(M)} \otimes C'$.

\end{proof}

\section{Acknowledgments}
We thank Daniel Dadush for helpful suggestions and pointers.
\bibliographystyle{abbrv}

\bibliographystyle{abbrv}
\bibliography{PropertyTest-Bib}



\end{document}